\documentclass[a4paper, runningheads]{llncs}
\usepackage[utf8]{inputenc}

\usepackage{thmtools}
\usepackage{todonotes}
\usepackage{graphicx}
\usepackage{complexity}
\usepackage{wrapfig}
\usepackage{subfig}
\usepackage{xspace}
\usepackage{hyperref,cite}
\usepackage{amsmath, amssymb, ntheorem}
\usepackage{xpatch}
\usepackage{doi,url}

\hypersetup{
hidelinks,
colorlinks=true,
citecolor=[rgb]{0.121 0.47 0.705},
linkcolor=[rgb]{0.121 0.47 0.705},
urlcolor=[rgb]{0.121 0.47 0.705}
}

\title{Maximizing Ink in Partial Edge Drawings\\ of $k$-plane Graphs\thanks{The authors thank Michael H\"oller and Birgit Schreiber for the fruitful discussions during the ``Seminar in Algorithms: Graphs and Geometry'' held in 2017 at TU Wien. A preliminary abstract of this paper has been presented at EuroCG 2018.}}
\author{Matthias Hummel 
\and Fabian Klute 
\and Soeren Nickel 
\and Martin~N\"ollenburg} 
\institute{Algorithms and Complexity Group, TU Wien, Vienna, Austria \email{matthiashummel@ymail.com,[fklute|noellenburg]@ac.tuwien.ac.at, soeren.nickel@tuwien.ac.at}
}

\newcommand{\ped}{\ensuremath{\textsc{PED}}\xspace}
\newcommand{\sped}{\ensuremath{\textsc{SPED}}\xspace}
\newcommand{\maxsped}{\ensuremath{\textsc{MaxSPED}}\xspace}
\newcommand{\maxped}{\ensuremath{\textsc{MaxPED}}\xspace}

\newcommand{\ppsat}{\ensuremath{\textsc{planar 3-Sat}}\xspace}
\newcommand{\sollong}{\ensuremath{\textit{long}}\xspace}

\newcommand{\solshort}{\ensuremath{\textit{short}}\xspace}

\let\doendproof\endproof
\renewcommand\endproof{\hfill \qed \doendproof}

\begin{document}
	\maketitle
	
	\begin{abstract}
		Partial edge drawing (PED) is a drawing style for non-planar graphs, in which edges are drawn only partially as pairs of opposing stubs on the respective end-vertices. 
		In a PED, by erasing the central parts of edges, all edge crossings and the resulting visual clutter are hidden in the undrawn parts of the edges.
		In symmetric partial edge drawings (SPEDs), the two stubs of each edge are required to have the same length. 
		It is known that maximizing the ink (or the total stub length) when transforming a straight-line graph drawing with crossings into a SPED is tractable for 2-plane input drawings, but \NP-hard for unrestricted inputs.
		We show that the problem remains \NP-hard even for 3-plane input drawings and establish \NP-hardness of ink maximization for PEDs of 4-plane graphs.
		Yet, for $k$-plane input drawings whose edge intersection graph forms a collection of trees or, more generally, whose intersection graph has bounded treewidth, we present efficient algorithms for computing maximum-ink PEDs and SPEDs.
		We implemented the treewidth-based algorithms and show a brief experimental  evaluation.
	\end{abstract}

	\section{Introduction}
	
	Visualizing non-planar graphs as node-link diagrams is challenging due to the visual clutter caused by edge crossings. The layout readability deteriorates as the edge density and thus the number of crossings increases~\cite{p-wagehu-97}.
	Therefore alternative layout styles are necessary for non-planar graphs.
	A radical approach first used in applied network visualization work by Becker et al.~\cite{bew-vnd-95} is to start with a traditional straight-line graph drawing and simply drop a large central part of each edge and with it many of the edge crossings.
	This idea relies on the closure and continuation principles in Gestalt psychology~\cite{k-pgp-35}, which imply that humans can still see a full line segment based only on the remaining edge stubs by filling in the missing information.
	User studies have confirmed that such drawings remain readable while reducing clutter significantly~\cite{bvkw-epdldge-12,bkl-us-15} and Burch et al.~\cite{Burch2014} presented an interactive graph visualization tool using partially drawn edges combined with fully drawn edges.
	
	\begin{figure}
		\centering
		\subfloat{
			\includegraphics{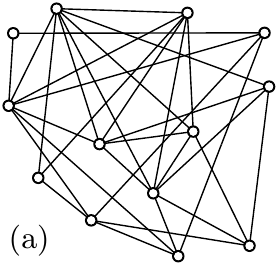}
		}
		\subfloat{
			\includegraphics{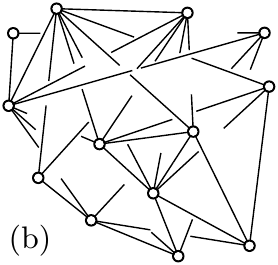}
		}
		\subfloat{
			\includegraphics{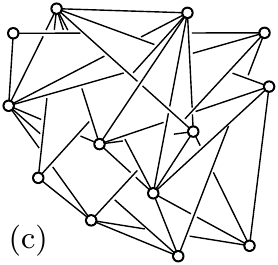}
		}
		\subfloat{
			\includegraphics{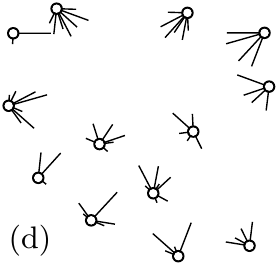}
		}
		\caption{Drawings of the same graph. (a) A straight-line drawing, (b) a maximum-ink SPED, (c) a maximum-ink PED, and (d) a maximum-ink SHPED.}\vspace{-.5cm}
		\label{fig:examples}
	\end{figure}

	The idea of drawing edges only partially has subsequently been formalized in graph drawing as follows~\cite{bk-ecbe-12}. 
	A \emph{partial edge drawing (PED)} is a graph drawing that maps vertices to points and edges to pairs of crossing-free edge stubs of positive length pointing towards each other.
	These edge stubs are obtained by erasing one contiguous central piece of the straight-line segment connecting the two endpoints of each edge.
	In other words each straight-line edge is divided into three parts, of which only the two outer ones are drawn (see Fig.~\ref{fig:examples}).
	More restricted and better readable~\cite{blmt-pedhmitc-16} variations of PEDs are \emph{symmetric} PEDs, in which both stubs of an edge must have the same length (see Fig.~\ref{fig:examples}(b)), and \emph{homogeneous} PEDs, in which the ratio of the stub length to the total edge length is the same constant for all edges.
	Symmetric stubs facilitate finding adjacent vertices due to the identical stub lengths at both vertices, and symmetric homogeneous stubs (see Fig.~\ref{fig:examples}(d)) additionally indicate the distance at which to find a neighboring vertex.
	Clearly, for very short stubs it is easy to hide all edge crossings, but reading adjacency information gets very difficult~\cite{bvkw-epdldge-12}.
	Therefore, the natural optimization problem in this formal setting is \emph{ink maximization}, i.e., maximizing the total stub length, so that as much information as possible is given in the drawing while all crossings disappear in the negative background~space.

	We study the ink maximization problem for partial edge drawings (PEDs) and symmetric partial edge drawings (SPEDs) with a given geometric input drawing.
	These problems are known as \maxped and \maxsped, respectively~\cite{bcgkmn-pped-17,bk-ecbe-12}. 
	Note that with a given input drawing, the ink maximization problem for symmetric homogeneous PEDs (SHPEDs) is trivial, as we can simply iterate over all crossings, choose the larger of the two stub ratios resolving the crossing and take the minimum of all these stub ratios, which yields the best solution.

	\medskip
	
	\noindent\textbf{Related Work.} As a first result, Bruckdorfer and Kaufmann~\cite{bk-ecbe-12} presented an integer linear program for solving \maxsped on general input drawings. 
	Later, Bruckdorfer et al.~\cite{bcgkmn-pped-17} gave an $O(n \log n)$-time algorithm for \maxsped on the class of 2-plane input drawings (no edge has more than two crossings), where $n$ is the number of vertices.
	They also described an efficient 2-approximation algorithm for the dual problem of minimizing the amount of erased ink for arbitrary input drawings.
	The PhD thesis of Bruckdorfer~\cite{b-sgh-15} presents a sketch of an \NP-hardness proof for \maxsped, but left the complexity of \maxped as an open problem, as well as the design of algorithms for \maxped. 
	
	There are a number of additional results for PEDs without a given input drawing, i.e., having the additional freedom of placing the vertices in the plane. 
	For example, the existence or non-existence of SHPEDs with a specified stub ratio $\delta$ for certain graph classes such as complete graphs, complete bipartite graphs, or graphs of bounded bandwidth has been investigated~\cite{bk-ecbe-12,bcgkmn-pped-17}.
	From a practical perspective, Bruckdorfer et al.~\cite{bkl-pa1-15} presented a force-directed layout algorithm to compute SHPEDs for stubs of $1/4$ of the total edge length, but without a guarantee that all crossings are eliminated. Moreover, the idea of partial edge drawings has also been extended to orthogonal graph layouts~\cite{bkm-1oped-14}.
		
	\medskip
	
	\noindent\textbf{Contribution.} We extend the results of Bruckdorfer et al.~\cite{bcgkmn-pped-17} on 2-plane geometric graph drawings to $k$-plane graph drawings for $k > 2$, where a given graph drawing is \emph{$k$-plane}, if every edge has at most $k$ crossings.
	In particular, we strengthen the \NP-hardness of \maxsped~\cite{b-sgh-15} to the case of 3-plane input drawings without three (or more) mutually crossing edges.
	 For \maxped we show \NP-hardness, even for 4-plane input drawings, which settles a conjecture of Bruckdorfer~\cite{b-sgh-15}.
	On the positive side, we give polynomial-time dynamic programming algorithms for both \maxsped and \maxped of $k$-plane graph drawings whose edge intersection graphs are collections of trees. 
	More generally, we extend the algorithmic idea and obtain FPT algorithms if the edge intersection graph has bounded treewidth 
	and also provide a proof-of-concept implementation.
	We evaluate the implementation using non-planar drawings from two classical layout algorithms, namely a force-based and a circular layout algorithm.

	\smallskip \noindent {\emph{Statements whose proofs are located in the appendix are marked with $\star$.}}

	\section{Preliminaries}
	\label{sec:preliminaries}

	Let $ G $ be a \emph{simple graph} with edge set $ E(G) = S = \{s_1,\dots,s_m\}$ and $ \Gamma $ a straight-line drawing of $ G $ in the plane. We call $ \Gamma $ \emph{$ k $-plane} if every edge $ s_i \in S $ is crossed by at most $ k $ other edges from $ S $ in $ \Gamma $. We often use 
	edge in $ S $ and segment in $ \Gamma $ interchangeably. %
	Hence $ S $ can be seen as a set of line segments.
	
	The \emph{intersection graph} $ C $ of $ \Gamma $ is the graph containing a vertex $v_i$ in $ V(C) $ for every $ s_i \in S $ and an edge $ v_i v_j \in E(C) $ between vertices $ v_i, v_j \in V(C) $ if the corresponding edges $ s_i, s_j \in S $ intersect in $ \Gamma $. 
	We also denote the segment in $S$ corresponding to a vertex $v \in V(C)$ by $s(v)$.
	Observe that the intersection graph $ C $ of a $ k $-plane drawing $ \Gamma $ has maximum degree~$ k $.  
	Using a standard sweep-line algorithm, computing the intersection graph $C$ of a set of $m$ line segments takes $O(m \log m + |E(C)|)$ time~\cite{bcko-cgaa-08}, where $|E(C)|$ is  the number of intersections.
	
	A \emph{partial edge drawing} (PED) $ D $ of $ \Gamma $ draws a fraction $ 0 < f_s \le 1 $ of each edge $ s = uv \in S $ by drawing edge stubs of length $ f_u |s| $ at $ u $ and $ f_v |s| $ at $ v $, s.t., $ f_u + f_v = f_s $. The \emph{ink} or \emph{ink value}  $I(D)$ of a PED $D$ is the total stub length $I(D) = \sum_{s \in S} f_s |s|$.
	In the problem \maxped, the task is to find for a given drawing~$\Gamma$ a PED $D^*$ such that $I(D^*)$ is maximum over all PEDs.
	A \emph{symmetric partial edge drawing} (SPED) $D$ of $\Gamma$ is a PED, s.t., $ f_u = f_v = f_s/2 $ for every edge $ s = uv \in S $. Then the \maxsped problem is defined analogously to \maxped.

	\paragraph{Treewidth.}
A \emph{tree decomposition}~\cite{robertson1984graph} for a graph $ G $ is a pair $ (T,\mathcal{X}) $ with $ T $ being a tree and $ \mathcal{X} $ a collection of subsets $ X_i \subseteq  V(G)$. For every edge $ uv \in E(G) $ we find $ t \in V(T) $ such that $ \{u,v\} \subseteq X_t $ and for every vertex $ v \in V(G) $ we get $ T[\{t \mid v \in X_t\}] $ is a connected and non-empty subtree of $ T $.
	To differentiate the vertices of $ G $  and $ T $ we call the vertices of $ T $ \emph{nodes} and a set $ X_i \in \mathcal{X} $ a \emph{bag}. Now the \emph{width} of a tree decomposition $ (T,\mathcal{X}) $ is defined as $ \max\{|X_t| - 1 \mid t \in V(T)\} $. For a graph $ G $ we say it has \emph{treewidth} $ \omega $, if the tree decomposition with minimum width has width $ \omega $. For a node $ t \in T $ we denote with $ V_t\subseteq V(G) $ the union of all bags $ X_{t'} \in \mathcal X $ such that $ t' $ is either $ t $ or a descendent of $ t $ in $ T $.

	In our algorithms we are using the well known \emph{nice tree decomposition}~\cite{downey2012parameterized}. For a graph $ G $ a nice tree decomposition $ (T,\mathcal{X}) $ is a special tree decomposition, where $ T $ is a rooted tree and we require that every node in $ T $ has at most two children. In case $ t \in V(T) $ has two children $ t_1, t_2 \in T $, then $ X_t = X_{t_1} = X_{t_2} $. Such a node is called \emph{join node}. For a node $ t \in T $ with a single child $ t_1\in T $ we find either $ |X_t| = |X_{t_1}| + 1 $, $ X_{t_1} \subset X_t $ or $ |X_t| = |X_{t_1}| - 1 $, $ X_t \subset X_{t_1}$. The former we call \emph{insert node} and the latter \emph{forget node}. A leaf $ t \in T $ is called a \emph{leaf node} and its bag contains a single vertex. Finally let $ r \in T $ be the root of $ T $, then $ X_r = \emptyset $. It is known that a tree decomposition can be transformed into a nice tree decomposition of the same width $ \omega $ and with $ O(\omega |V(G)|) $ tree nodes in time linear in the size of the graph $ G $~\cite{downey2012parameterized}.

	\section{Complexity}
	We first investigate the complexity of \maxsped and \maxped, and prove both problems to be \NP-hard for 3-plane and 4-plane input drawings, respectively. %

	\subsection{Hardness of \maxsped for $k\ge3$}
	\label{sec:hardness}

	We close the gap between the known \NP-hardness of \maxsped~\cite{b-sgh-15} for general input drawings and the polynomial-time algorithm for 2-plane drawings~\cite{bcgkmn-pped-17}.

	\begin{theorem}\label{thm:hard}
		\maxsped is \NP-hard even for 3-plane graph drawings.
	\end{theorem}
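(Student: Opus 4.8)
The plan is to reduce from \planthree, the classical \NP-complete variant of satisfiability in which the clause–variable incidence graph is planar, following the strategy sketched by Bruckdorfer~\cite{b-sgh-15} but carefully engineering the gadgets so that no edge is crossed more than three times and no three edges mutually cross. The overall idea is standard for this kind of geometric optimization problem: we build a geometric arrangement of segments whose maximum-ink SPED has a prescribed value if and only if the formula is satisfiable. The ink objective forces a binary choice at each crossing (which of the two crossing edges keeps the longer stub), and we exploit this to encode truth values, propagate them along ``wire'' gadgets, and verify clauses with clause gadgets.

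The construction proceeds gadget by gadget. First I would design a \emph{variable gadget}: a cyclic chain of segments arranged so that, up to a global ink loss that is the same for both optimal configurations, there are exactly two locally optimal states, which we read as \textsc{true} and \textsc{false}. The crucial quantitative point is that every segment in the chain is crossed by only its two neighbours in the chain (degree two in the intersection graph), leaving a budget of one more crossing per segment for interfacing with clause gadgets. Second, a \emph{wire/transmission gadget} carries the chosen truth value toward the clauses; since the incidence graph is planar we can route wires without unplanned crossings, and by bending a wire through a sequence of segments we can realize the planar embedding given by \planthree. Third, a \emph{clause gadget} for a clause with three literals is a small arrangement of three segments, one per incident literal wire, such that the gadget attains its maximum ink exactly when at least one of the three incoming wires arrives in the ``satisfying'' state; the third available crossing on each wire segment is spent here. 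Throughout, I would verify the $3$-planarity invariant locally in each gadget and at each interface, and also check that the arrangement can be drawn with rational coordinates of polynomial bit-length, so the reduction is genuinely polynomial.

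Finally I would compute the exact ink value $K$ achievable by a globally consistent choice, argue soundness (a satisfying assignment yields a SPED of ink at least $K$ by setting each variable chain to the corresponding state and checking every clause gadget is satisfied) and completeness (a SPED of ink at least $K$ must put every variable chain into one of its two canonical states — any deviation strictly loses ink — and consistency of wires plus the clause-gadget condition then forces a satisfying assignment). The main obstacle, and where most of the work lies, is the gadget engineering under the $3$-plane and no-three-mutually-crossing restrictions: in less constrained hardness proofs one can be generous with crossings to synchronize gadgets, but here every segment has essentially one ``free'' crossing to work with, so the variable, wire, and clause gadgets must be co-designed so that truth propagation, turning, and clause testing each cost at most that one crossing while still producing a sharp ink gap between the satisfying and non-satisfying configurations. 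Getting the geometry of the clause gadget right — three segments, pairwise crossing would already violate the no-triangle condition, so the three literal segments must be interleaved with auxiliary segments rather than crossing each other directly — is the delicate step I would spend the most care on.
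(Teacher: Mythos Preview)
Your high-level plan (reduce from \planthree\ with cyclic variable gadgets, wires, and clause gadgets; compute a target ink value $L$; argue both directions) matches the paper's approach. However, there is a concrete misreading that derails your clause gadget design. You interpret ``no three mutually crossing edges'' as forbidding \emph{triangles in the intersection graph}, and therefore conclude that ``three segments, pairwise crossing would already violate the no-triangle condition, so the three literal segments must be interleaved with auxiliary segments.'' That is not the constraint. The condition is that no \emph{single crossing point} is shared by three or more segments---i.e., every crossing is simple. Three segments that pairwise intersect at three \emph{distinct} points are perfectly allowed, and this is exactly what the paper uses for the clause gadget.

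Once this is corrected, the ``delicate step'' you worry about disappears. The paper's clause gadget is simply a triple of pairwise intersecting length-$8$ segments whose three crossing points are each at distance~$1$ from an endpoint. In any SPED at most one of the three can be fully drawn; the other two are forced to length-$1$ stubs, giving a maximum contribution of $8+2+2=12$, attained precisely when one incoming literal wire is in its satisfying state and lets its triangle segment be drawn full. Variable gadgets are cycles of red--green segment pairs (length~$8$, crossing at distance~$1$) with exactly two ink-maximal states; wires are even-length chains of length-$4$ segment pairs. All endpoints and crossings sit on a triangular grid, and one checks directly that no segment receives more than three crossings. Your self-imposed no-triangle restriction is what makes the gadget engineering look hard; drop it and the construction is the standard one.
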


\noindent \textit{Proof.}
		We reduce from the \NP-hard problem \ppsat~\cite{l-pftu-82} using similar ideas as in Bruckdorfer's sketch of the hardness proof for general \maxsped~\cite{b-sgh-15}. 
		Here we specify precisely the maximum ink contributions of all gadgets needed for a satisfying variable assignment. 
		Our variable gadgets are cycles of edge pairs that admit exactly two maximum-ink states.
		We construct clause gadgets consisting of three pairwise intersecting edges so that all crossings are between two edges only, while Bruckdorfer's gadgets have multiple edges intersecting in the same point.
		Let $\phi$ be a planar \textsc{3-Sat} formula with $n$ variables $\{x_1, \dots, x_n\}$ and $m$ clauses $\{c_1, \dots, c_m\}$, each consisting of three literals.
		We can assume that $\phi$ comes with a planar drawing of its variable-clause graph $H_\phi$, which has a vertex for each variable $x_i$ and a vertex for each clause $c_j$.
		Each clause vertex is connected to the three variables appearing in the clause.
		In the drawing of $H_\phi$ all variable vertices are placed on a horizontal line and the clause vertices connect to the adjacent variable vertices either from above or from below the horizontal line.
		In our reduction (see Fig.~\ref{fig:gadget_connection}) we mimic the drawing of $H_\phi$ by creating a 3-plane drawing $\Gamma_\phi$ as a set of line segments of two distinct lengths and define a value $L$ such that $\Gamma_\phi$ has a SPED with ink at least $L$ if and only if $\phi$ is satisfiable.
		The whole construction will be drawn onto a triangular grid of polynomial size.

		\begin{wrapfigure}[14]{r}{.6\textwidth}
			\centering
			\vspace{-1cm}
			\includegraphics{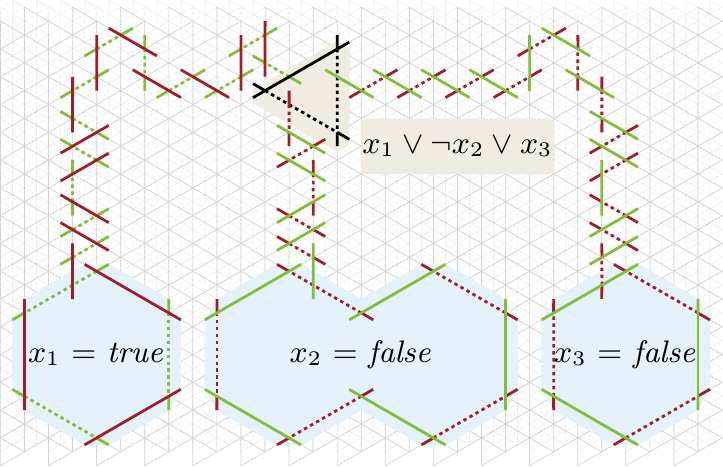}
			\caption{%
				Three variables and a satisfied clause gadget. Dotted parts do not belong to the SPED.}
			\label{fig:gadget_connection}
		\end{wrapfigure}
		
		All segments in the clause or variable gadgets are of length 8.
		The segments used for the connections are of length 4.
		We use pairs of intersecting segments, alternatingly colored red and green.
		The intersection point of each red-green segment pair is at distance 1 from an endpoint.
		Thus, the maximum amount of ink contributed by such a pair is 10 or 6, respectively (one full segment of length 8 or 4, respectively, and two stubs of length 1 each).
		
		Each variable gadget is a cycle of segment pairs, with (at least) one pair for each occurrence of the variable in $\phi$, see Fig.~\ref{fig:gadget_connection}. 
		This cycle has exactly two ink-maximal SPEDs: either all red edges are full segments and all green edges are length-1 stubs or vice versa.
		We associate the configuration with green stubs and full red segments with the value \emph{true} and the configuration with red stubs and full green segments with \emph{false}.
		
		For each clause we construct a triple of mutually intersecting segments, see the gadget on yellow background in the upper part of Fig.~\ref{fig:gadget_connection}.
		Again, their intersection points are at distance~1 from the endpoints.
		It is clear that in such a clause triangle at most one of the three segments can be fully drawn, while the stubs of the other two can have length at most~1.
		Hence, the maximum amount of ink in a SPED contributed by a clause gadget is~12.
		
		Finally, we connect variable and clause gadgets, such that a clause gadget can contribute its maximum ink value of 12 if and only if the clause is satisfied by the selected truth assignment to the variables.
		For a positive (negative) literal, we create a path of even length between a green (red) edge of the variable gadget and one of the three edges of the clause gadget as shown in Fig.~\ref{fig:gadget_connection}.
		The first edge~$s$ of this path intersects the corresponding variable segment $s'$ such that $s'$ is split into a piece of length 3 and a piece of length 5, whereas $s$ is split into a piece of length 1 and a piece of length 3.
		The last edge of the path intersects the corresponding clause edge again with a length ratio of 3 to 5.
		The path consists of a chain of red-green segment pairs, each  contributing an ink value of at most~6.
		
		It remains to argue that the resulting drawing has polynomial size and is a correct reduction. 
		All segments are drawn on the underlying triangular grid and have integer lengths; all intersection points are grid points, too. 
		Since the drawing of $H_\phi$ has polynomial size, so do the constructed gadgets.
		Additionally, no segment intersects more than three other segments, so the  drawing is 3-plane. 
		
		For the correctness of the reduction, let $L$ be the ink value obtained by counting 10 for each red-green segment pair in a variable, 6 for each red-green segment pair in a wire, and 12 for each clause gadget.
		First assume that $\phi$ has a satisfying truth assignment and put each variable gadget in its corresponding state.
		For each clause, select exactly one literal with value \emph{true} in the satisfying truth assignment. 
		We draw the clause segment that connects to the selected literal as a full segment and the other two as length-1 stubs.
		Recall that the literal paths are oriented from the variable gadget to the clause gadget.
		Since the last segment of the selected literal path must be drawn as two length-1 stubs, the only way of having a maximum contribution of that path is by alternating stubs and full segments.
		Hence, the first segment of the path must be a full segment.
		But because the variable is in the state that sets the literal to \emph{true}, the intersecting variable segment is drawn as two stubs and the path configuration is valid.
		For the two non-selected literals, we can draw the last segments of their paths as full segments, as well as every segment at an even position, while the segments at odd positions are drawn as stubs. 
		This is compatible with any of the two variable configurations and proves that we can indeed achieve ink value~$L$.
		
		Conversely, assume that we have a SPED with ink value $L$. 
		By construction, every red-green segment pair and every clause gadget must contribute its respective maximum ink value.
		In particular, each variable gadget is either in state \emph{true} or \emph{false}.
		By design of the gadgets it is straight-forward to verify that the corresponding truth assignment satisfies~$\phi$. \qed

 	\subsection{Hardness of \maxped for $k\ge4$}
		\label{sec:hardness_maxped}
		We adapt our \NP-hardness proof for \maxsped to show  that \maxped is \NP-hard for $k$-plane drawings with $ k \geq 4 $.

		\begin{restatable}[$ \star $]{theorem}{thmmaxpedhard}\label{thm:maxped_hard}
			\maxped is \NP-hard even for 4-plane graph drawings.
		\end{restatable}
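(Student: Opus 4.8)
\medskip
\noindent\emph{Proof idea (a plan).}
The plan is to reduce from \ppsat, reusing the architecture of the proof of Theorem~\ref{thm:hard} --- variable cycles, literal wires and clause gadgets, all laid out on a triangular grid of polynomial size with integral coordinates and unit crossing offsets --- but to redesign the gadgets so that they still behave discretely in the weaker, asymmetric PED model, and to do so using at most four crossings per segment.

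The first thing to record is why a verbatim reuse fails. A single crossing forces no ink loss in a PED: if segments $s$ and $t$ cross at a point $p$, we may draw $s$ fully and still realize almost all of $t$ by letting the undrawn central piece of $t$ be an arbitrarily short interval around $p$. Hence a gadget segment can be pinned to a ``short'' state only if it is crossed close to \emph{both} of its endpoints and both crossing partners are drawn through those two points; then the undrawn piece of the segment has to span the whole stretch between the two crossings, leaving only two unit-length end stubs. This is exactly why each relevant segment needs one more crossing than in the \maxsped construction, and why the resulting drawing becomes $4$-plane.

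With this in mind I would keep the \emph{variable gadget} close to the original: a cyclic chain of red--green segment pairs in which every segment already crosses its two cyclic neighbours, now with those two crossings placed at distance $1$ from the segment's two endpoints. A short case analysis then shows that every maximum-ink PED of such a cycle is in one of exactly two states --- all red segments full and all green segments reduced to their two unit stubs, or vice versa --- which I label \emph{true} and \emph{false}, each pair contributing the same fixed amount of ink in both states. The \emph{literal wires} become chains of red--green pairs in which each wire segment again crosses its two chain-neighbours near its two ends, so the same alternating full/short pattern is forced along the wire; I attach the first wire segment to the appropriate variable segment and make the wire even in length, so that the clause segment a wire feeds into can be drawn full only when the variable is in the state that makes the literal true.

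The technical heart, and the step I expect to be the main obstacle, is the \emph{clause gadget}. In the asymmetric model the naive gadget of three pairwise-crossing segments no longer works: by cyclically rotating which segment yields at each of the three crossings one keeps all three segments almost fully drawn, so the gadget would contribute near-maximum ink regardless of the incident wires. I would therefore augment it --- for instance by giving each clause segment one additional crossing with a short auxiliary ``locking'' segment near one end, which raises each clause segment's crossing number to the permitted value~$4$ --- so that in every maximum-ink PED of the gadget at most one of the three clause segments is drawn (near-)fully, and that one can be drawn (near-)fully only when its incident wire arrives short, i.e.\ only when the corresponding literal is satisfied. Pinning down the exact placement of these extra crossings so that \emph{all} rotation-type cheats are excluded while the honest satisfying configuration still attains the bound is the delicate part. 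Given a correct gadget, one sets $L$ equal to the sum of the fixed per-pair contributions over all variable and wire pairs plus the satisfied-clause contribution over all clauses, and argues the two directions exactly as in Theorem~\ref{thm:hard}: from a satisfying assignment one builds a PED of ink $L$ by putting every variable gadget in the matching state, propagating along the wires, and selecting one satisfied literal per clause; conversely a PED of ink $L$ must attain every per-gadget maximum, which forces each variable gadget into a consistent state and every clause to be satisfied by it. Finally one checks the two routine points: the whole drawing lives on a grid of polynomial size with all coordinates and crossing offsets integral, and --- with the standard convention that the erased central piece is a closed interval containing the crossing points --- the maximum in \maxped is actually attained, so that comparison with the threshold $L$ is meaningful.
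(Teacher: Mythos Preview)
Your variable-gadget analysis contains a genuine error that breaks the reduction. You correctly observe that a single crossing forces no ink loss in a PED, and you conclude that a segment can be pinned ``short'' only if it is crossed near \emph{both} endpoints by partners drawn through those points. But in a cycle where every segment crosses only its two cyclic neighbours, the maximum-ink PED is \emph{not} the alternating full/short pattern you describe: each segment can place its single $\varepsilon$-gap at exactly one of its two crossings and be drawn almost fully, and this is consistent around the whole cycle (the gaps propagate either clockwise or counterclockwise). So the maximum-ink PED of your cycle has \emph{every} segment at ink $|s|-\varepsilon$, and the two optimal states differ only in the \emph{direction} the gaps go, not in which colour is short. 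Your target value $L$ built from ``full red / short green'' contributions is strictly below the actual optimum, so achieving ink $\ge L$ tells you nothing.

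The paper's construction embraces exactly this: the binary state carried by a segment is \emph{where its $\varepsilon$-gap sits}, not whether it is long or short. The variable gadget is a plain cycle (each segment has two crossings), and ``true''/``false'' are the two gap-propagation directions. Wires likewise encode a direction. The clause gadget is where the $4$-planarity and the real work are: the three triangle segments are each extended and made to cross a heavy \emph{weight component} (a $4$-cycle whose segments have length on the order of $18\alpha$). Because the weight segments are so long, any triangle segment that is drawn through its weight crossing forces a loss of at least $9\alpha$ in the weight cycle, which exceeds the $3\alpha$ one could possibly gain inside the triangle; hence every triangle segment must place its gap at the weight crossing, and then the three triangle crossings behave rigidly enough that at most one ``connecting section'' can be drawn. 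A wire attached to that section can supply it only if the wire's gaps propagate toward the clause, which traces back to the correct variable state. Your proposed ``short auxiliary locking segment'' does not achieve this: if the locking segment is short, the clause segment simply puts its $\varepsilon$-gap at the locking crossing at negligible cost, and the rotation cheat survives. The weight components must be \emph{long} precisely so that yielding to them elsewhere is prohibitively expensive.
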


		\noindent \textit{Proof (Sketch).} As in the proof of Theorem~\ref{thm:hard}, we show the result via a reduction from \textsc{planar 3-Sat}. The key change for \maxped comes from the fact, that the two stubs are now independent from each other. Take two crossing edges as an example. We now can draw the two segments with almost full ink value by just excluding an $ \varepsilon $-sized gap in one of the two segments for some small $ \varepsilon > 0$. We will use this placement of a gap in the variable and wire gadgets, to create two possible states. As before we use an underlying triangular grid, which we omit in the figures of this section for ease of presentation.

		Let $ \phi $ be a planar \textsc{3-Sat} formula. For one variable $ x $ of $ \phi $ we construct a variable gadget consisting of a cycle of $ p $ line segments $ t_1,\dots,t_p $, see Fig.~\ref{fig:variable_sketch}. Such a cycle has exactly two maximum-ink drawings. One, where for each segment $ t_i $ the gap is placed at its intersection with $ t_{i+1} \pmod{p}$ and another drawing, in which the gap is placed at its intersection with $ t_{i-1} \pmod{p}$.
		 Figure~\ref{fig:variable_sketch} shows a gadget in its true state. 
		We set the length of one segment $ t_{i} $ to $ \alpha + 2\beta $, where $\alpha \in \mathbb N$ is the distance between the two intersection points and $\beta$ is the length of each stub sticking out.
		
		A clause gadget is a cycle of three pairwise intersecting segments $ r_1, r_2, r_3$, which we call triangle segments. All segments are elongated at one end, such that the total length of a segment $ r_i $, $ i \in \{1,2,3\} $, is $4\alpha + 2\beta$. Since the stubs are independent we could draw all three triangle segments. To avoid this we attach a big 4-cycle to each $r_i$. Then $ r_i $ intersects the 4-cycle at a segment $r_w$, see Fig.~\ref{fig:clause_sketch}. If we place the gap of $r_w$ at its intersection with $r_i$, we lose more units of ink than we gain by drawing every triangle segment $ r_i $ completely. Hence it is never possible to draw more than one full triangle segment in an ink-maximal PED.

		Finally, a wire is a chain of segments $ s_1,\dots,s_z $ for each variable occurrence in a clause $ c $ in $ \phi $. We place the wire such that $ s_1 $ intersects the corresponding variable gadget at some segment $ t_j $, and $ s_z $ intersects the clause gadget of $ c $ at one if its triangle segments $ r_i $. For the variable we place this intersection point at distance $ \beta $  to its intersection with $ s_{i+1} $, if it occurs positively, or with $ s_{i - 1} $, if it occurs negated. At the clause gadget we place the intersection of $ s_z $ with the corresponding $ r_i $ at distance $ \beta $ from the intersection between $ r_i $ and its successor $ r_{i+1} $, see the small squares in Fig.~\ref{fig:clause_sketch}. Each segment $ s_i $ has length $ \alpha/2 + 2\beta $.

		\begin{wrapfigure}[13]{r}{.62\textwidth}
			\centering
			\vspace{-1.15cm}
			\subfloat[Variable gadget]{
				\centering
				\includegraphics{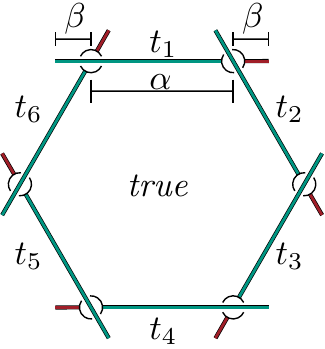}
				\label{fig:variable_sketch}
			}
			\subfloat[Clause gadget]{
				\centering
				\includegraphics{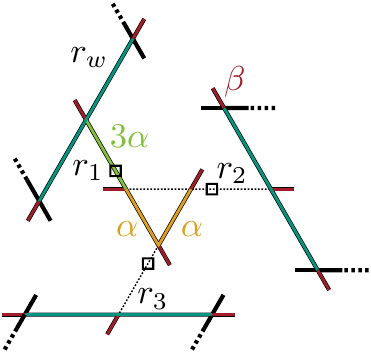}
				\label{fig:clause_sketch}				
			}
			\caption{Gadgets of our reduction. 
				Squares mark connection points for wires.}
		\end{wrapfigure}			

		Correctness follows  similarly to the proof of Theorem~\ref{thm:hard}. 
		Let $ \Gamma_\phi $ be the set of line segments constructed as above for a planar \textsc{3-Sat} formula $ \phi $. %
		We determine an ink value $ L $, s.t., $ \Gamma_\phi $ has a PED $ D $ with $ I(D) \geq L $ if and only if $ \phi $ has a satisfying variable assignment. The key property is that for each clause we find one wire such that its last segment is forced to place its gap at the intersection with the clause gadget in an ink-maximal PED. Then for each other segment $ s_i$, $ i \in \{1,\dots,z-1\} $, we must place its gap at the intersection with $ s_{i+1} $. Otherwise we would have to remove either $ \alpha/2 $ units of ink in the middle part of some $ s_i $, or remove $ \alpha -\beta $ units of ink at the variable gadget intersected by $ s_1 $. We show that both can be avoided if and only if $ \phi $ has a satisfying assignment. \hfill \qed %

 	\begin{corollary}\label{cor:nofptmaxped}
		\maxped and \maxsped for $k$-plane drawings are not fixed-parameter tractable, when parameterized solely by $k$.
	\end{corollary}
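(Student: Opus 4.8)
\textit{Proof proposal.} The plan is to derive the statement directly from Theorems~\ref{thm:hard} and~\ref{thm:maxped_hard} via the standard observation that a parameterized problem which is already \NP-hard for a single fixed value of the parameter cannot be fixed-parameter tractable unless $\P = \NP$. Concretely, recall that an algorithm witnessing fixed-parameter tractability with respect to $k$ must run in time $f(k)\cdot n^{c}$ for some computable function $f$ and some constant $c$ independent of the input, where $n$ is the input size.

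First I would invoke Theorem~\ref{thm:hard}, which shows that (the natural decision version of) \maxsped is \NP-hard already when restricted to $3$-plane input drawings, i.e.\ to instances with the fixed parameter value $k = 3$. Suppose for contradiction that \maxsped parameterized solely by $k$ admitted an FPT algorithm running in time $f(k)\cdot n^{c}$. Restricting this algorithm to the class of $3$-plane drawings yields a running time of $f(3)\cdot n^{c}$; since $f(3)$ is a fixed constant, this is a polynomial-time algorithm deciding an \NP-hard problem, forcing $\P = \NP$. The same argument applied to Theorem~\ref{thm:maxped_hard} with the fixed parameter value $k = 4$ rules out an FPT algorithm for \maxped parameterized solely by $k$, again unless $\P = \NP$.

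I expect no real obstacle here, as the corollary is an immediate consequence of the two preceding hardness results; the only point requiring a word of care is that the hardness statements, phrased for the optimization problems, transfer to the corresponding threshold decision problems (``is there a (S)PED of ink at least $L$?''), which are the objects to which the notion of fixed-parameter tractability formally applies — and this transfer is routine since the reductions in the two theorems already produce an explicit threshold value~$L$. \qed
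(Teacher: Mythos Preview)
Your proposal is correct and is exactly the implicit argument the paper relies on: the corollary is stated without proof, as an immediate consequence of Theorems~\ref{thm:hard} and~\ref{thm:maxped_hard}, via the standard observation that \NP-hardness at a fixed parameter value precludes fixed-parameter tractability unless $\P = \NP$. The only remark is that the corollary as phrased omits the ``unless $\P = \NP$'' caveat that you (rightly) make explicit.
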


	\section{Algorithms}
	Sections~\ref{sec:hardness} and~\ref{sec:hardness_maxped} showed that \maxsped and \maxped are generally \NP-hard for $k \ge 3$ and $ k \geq 4 $ respectively. 
	Now we consider the special case that the intersection graph of the $k$-plane input drawing is a tree or has bounded treewidth. 
		In both cases we present polynomial-time dynamic programming algorithms for \maxsped (Sections~\ref{sec:tree} and~\ref{sec:btreewidth}) and \maxped (Section~\ref{sec:alg:maxped}).

	Let $ C $ be the intersection graph of a given drawing $\Gamma$ of a graph $ G $ as defined in Section~\ref{sec:preliminaries}. 
	Let $ u \in V(C) $ and $ \delta_u  = \deg(u) $. 
	Then for the corresponding segment $ s(u) \in S $ there are $ \delta_u + 1 $ relevant stub pairs including the whole segment, see Fig.~\ref{fig:tree}. 
	Let $ \ell_1(u), \dots, \ell_{\delta_u}(u) \in \mathbb{R}_+$ be the stub lengths induced by the intersection points of $s(u)$ with the segments of the neighbors of $u$,  sorted from shorter to longer stubs. We define  $ \ell_0(u) $ as the length of the whole segment $ s(u) $. 

	\begin{figure}[tbp]
		\centering
		\includegraphics[width=\textwidth]{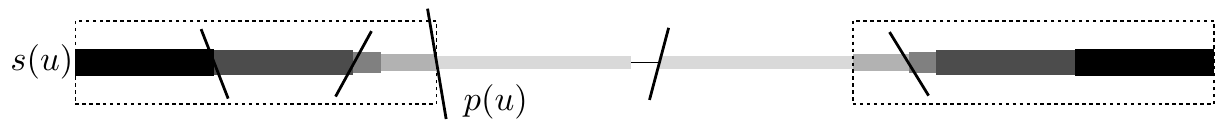}
		\caption{A segment $s(u)$ with five intersecting segments and the induced stub lengths. The boxed stub lengths are considered in $\solshort(u)$ and do not affect~$p(u)$.} 
		\label{fig:tree}
	\end{figure}

	\subsection{Trees}
	\label{sec:tree}

	Here we assume that $ C $ is a rooted tree of maximum degree $ k $. 
	We give a bottom-up dynamic programming algorithm for solving \maxsped on $ C $. 
	For each vertex $u \in V(C) $ we compute and store the maximum ink values $W_i(u)$ for $i = 0, \dots, \delta_u$ for the subtree rooted at $u$ such that $s(u)$ is drawn as a pair of stubs of length $\ell_i(u)$.
	For $ u \in V(C) $ let $ p(u) $ denote the parent of $ u $ in $ C $ and let $ c(u) $ denote the set of its children. 
	For $ u \in V(C) $  %
	let $ i_p $ be the index of the stub length $\ell_{i_p}(u)$ induced by the intersection point of $s(u)$ and $s(p(u))$.
	We define the following two values, which allow us to categorize the stub lengths into those not affecting the stubs of the parent and those that do affect the parent:
	\begin{align*}
		\solshort(u) = \max\{W_1(u),\dots,W_{i_p}(u)\} &&
		\sollong(u) = \max\{W_0(u),\dots,W_{\delta_u}(u)\}.
	\end{align*}
	Figure~\ref{fig:tree} highlights the stub lengths that are considered in $\solshort(u)$. We recursively define %
	\begin{align}
		\label{rec:tree}
		W_i(u) = \ell_i(u) + \sum_{v \in c(u)}
		\begin{cases}
		\solshort(v) & \text{if } s(u) \text{ with length } \ell_i(u) \text{ intersects } s(v) \\
		\sollong(v) & \text{otherwise.}
		\end{cases}
	\end{align}
	The correctness of Recurrence~(\ref{rec:tree}) follows by induction. 
	For a leaf $u$ in $C$ the set $c(u)$ is empty and the correctness of $W_i(u)$ is immediate.
	Further, $\solshort(u) = W_1(u)$ and $\sollong(u) = W_0(u)$ are set correctly for the parent $p(u)$.
	For an inner vertex $ u $ with degree $\delta_u$ we can assume by the induction hypothesis that the values $ \solshort(v) $ and $ \sollong(v) $ are computed correctly for all children $ v \in c(u) $. 
	Each value $ W_i(u) $ for $ 0 \leq i \leq \delta_u $ is then the stub length $ \ell_i(u) $ plus the sum of the maximum ink we can achieve among the children subject to the stubs of $u$ being drawn with length $\ell_i(u)$. 
	Setting $ \sollong(u) $ and $ \solshort(u) $ as above yields the two maximum ink values that are relevant for $ p(u) $.

	Recurrence~(\ref{rec:tree}) can be solved naively in $ O(mk^2) $ time, where $m = |V(C)|$. 
	Using the order on the stub lengths  we can improve this to $ O(mk) $ time by computing all $ W_i(u) $ for one $ u \in V(C) $ in $O(k)$ time.
	Let $u \in V(C)$ be a vertex with degree $\delta_u$. 
	The values $W_0(u) = \ell_0(u) + \sum_{v\in c(u)} \solshort(v)$ and $W_1(u) = \ell_1(u) + \sum_{v\in c(u)} \sollong(v)$ for the whole segment $s(u)$ and the shortest stubs can be computed in $O(k)$ time each.
	Now $W_{j+1}(u)$ can be computed from $W_j(u)$ in $O(1)$ time as follows.
	Let $v_j $ be the neighbor of $u$ that induces stub length $\ell_j(u)$ and assume $v_j \ne p(u)$.
	In $W_j(u)$ we could still count the value $\sollong(v_j)$, but in $W_{j+1}(u)$ the stub length of $u$ implies that $v_j$ can contribute only to $\solshort(v_j)$. 
	Then $W_{j+1}(u) = W_j(u) - \sollong(v_j) + \solshort(v_j)$.
	If $v_j=p(u)$ the two values $W_j(u)$ and $W_{j+1}(u)$ are equal, as the corresponding change in stub length has no effect on the children of $u$.
	Then computing $\solshort(u)$ and $\sollong(u)$ takes $O(k)$ time.

	Using standard backtracking we are able to find an optimal solution to the \maxsped problem on $ G $ with drawing $ \Gamma $ by solving Recurrence~(\ref{rec:tree}) in $ O(mk) $ time. Furthermore, the intersection graph $ C $ with $ m $ edges can be computed in $ O(m\log m) $ time. We obtain the following theorem.

	\begin{theorem}
		\label{thm:tree}
		Let $ G $ be a simple graph with $m$ edges and $ \Gamma $ a straight-line drawing of $ G $. If the intersection graph $ C $ of $ \Gamma $ is a tree with maximum degree $ k \in \mathbb{N} $, then problem \maxsped can be solved in $ O(mk + m \log m) $ time and space.
	\end{theorem}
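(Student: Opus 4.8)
The plan is to turn the recurrence and the running-time discussion preceding the statement into a complete argument. First I would fix an arbitrary root of the tree $C$, so that $p(u)$ and $c(u)$ are well defined for every $u\in V(C)$, and precompute the sorted stub lengths $\ell_0(u),\dots,\ell_{\delta_u}(u)$ for each vertex together with the index $i_p$ of the stub length cut off by the parent segment; this is immediate once the intersection graph is known. I would then define, for each $u$ and each $i\in\{0,\dots,\delta_u\}$, the quantity $W_i(u)$ as the maximum ink obtainable over all SPEDs of the subdrawing corresponding to the subtree rooted at $u$, under the constraint that $s(u)$ is drawn with stub length $\ell_i(u)$.

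The core of the proof is showing that Recurrence~(\ref{rec:tree}) computes exactly these values. I would argue this by induction on the subtree, exactly as sketched: for a leaf $u$ the subtree is a single segment, $c(u)=\emptyset$, and $W_i(u)=\ell_i(u)$ is correct by definition. For an internal vertex $u$, once $s(u)$ is fixed to stub length $\ell_i(u)$ the only interaction with a child $v\in c(u)$ is the crossing between $s(u)$ and $s(v)$, and in the tree this is the unique shared constraint; hence the children's subtrees can be optimized independently. If $\ell_i(u)$ is short enough that the chosen stubs of $s(u)$ do not reach the crossing with $s(v)$, then $s(v)$ may be drawn with any stub length, contributing $\sollong(v)=\max_j W_j(v)$; otherwise $s(v)$ must keep its stub below the crossing point with $s(u)$, contributing $\solshort(v)=\max_{1\le j\le i_p(v)} W_j(v)$. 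Summing over children and adding $\ell_i(u)$ gives precisely the right-hand side of~(\ref{rec:tree}), and $\sollong(u)$, $\solshort(u)$ package exactly the two pieces of information the parent of $u$ needs. The optimum for the whole drawing is $\sollong(r)$ for the root $r$, and an optimal assignment is recovered by standard top-down backtracking through the table.

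For the running time I would first note the naive bound: evaluating~(\ref{rec:tree}) for all $i$ at a vertex of degree $\delta_u$ costs $O(\delta_u^2)$, summing to $O(mk^2)$ over all vertices (using $\sum_u \delta_u = 2|E(C)| = O(m)$). The improvement to $O(mk)$ comes from exploiting the sorted order of the $\ell_i(u)$: compute $W_0(u)$ and $W_1(u)$ directly in $O(\delta_u)$ time, and then obtain $W_{j+1}(u)$ from $W_j(u)$ in $O(1)$ time by a single swap $W_{j+1}(u)=W_j(u)-\sollong(v_j)+\solshort(v_j)$ when the $j$-th crossing is with a child $v_j$, and by $W_{j+1}(u)=W_j(u)$ when it is with the parent (since changing $s(u)$ past the parent crossing does not affect any child). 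This yields $O(\delta_u)$ time per vertex, hence $O(mk)$ overall, plus $O(mk)$ space for the tables; backtracking adds only $O(m)$. Finally I would add the $O(m\log m)$ cost of the sweep-line computation of $C$ (here $|E(C)|=m-1$ since $C$ is a tree), giving the claimed $O(mk+m\log m)$ bound. I expect the only delicate point to be arguing cleanly that the parent crossing is the unique ``upward'' constraint so that the $\solshort/\sollong$ dichotomy is exhaustive and the $O(1)$ update is valid in every case — in particular handling the sub-case where the parent crossing falls between two consecutive child crossings. \qed
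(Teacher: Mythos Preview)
Your proposal is correct and follows essentially the same approach as the paper: the same recurrence~(\ref{rec:tree}) with the $\solshort/\sollong$ dichotomy, the same inductive correctness argument, and the same $O(1)$ incremental update from $W_j(u)$ to $W_{j+1}(u)$ to improve $O(mk^2)$ to $O(mk)$. The only minor remark is that the update $W_{j+1}(u)=W_j(u)-\sollong(v_j)+\solshort(v_j)$ (as stated both by you and in the paper) tacitly omits the additive term $\ell_{j+1}(u)-\ell_j(u)$; this does not affect the $O(1)$ cost or the overall bound.
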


	\subsection{Bounded Treewidth}
	\label{sec:btreewidth}
	Now we extend the case of a simple tree to the case that the intersection graph $C$ %
	has treewidth at most $\omega$. Our algorithm and proof of correctness follow a similar approach as the weighted independent set algorithm presented by Cygan et al.~\cite{cygan2015parameterized}. Let $ (T, \mathcal X) $ be a nice tree decomposition of $ C $ and $ k $ the maximum degree in $ C $. For $ V' \subseteq V(C) $ we define the \emph{stub set} $ S(V') := \{(u,\ell_i(u)) \mid u \in V' \text{ and } i = 0,\dots,\delta_u\} $. For $ (u,\ell_u), (v,\ell_v) \in S(V') $, $ u\neq v $, we say $ (u,\ell_u) $ \emph{intersects} $ (v,\ell_v) $ if $ s(u) $ drawn with stub length $ \ell_u $ intersects $ s(v) $ drawn with length $ \ell_v $. Further we call $ S(V') $ \emph{valid} if $ S(V') $ contains exactly one pair $ (u,\ell) $ for each $ u \in V' $ and no two pairs in $ S(V') $ intersect, i.e., the stub lengths in $ S(V') $ imply a \sped in the input drawing $ \Gamma $. Further we define the ink of a stub set $S(V')$ as $ I(S(V')) := \sum_{(u,\ell) \in S(V')}\ell $.
		
	\begin{restatable}[$ \star $]{lemma}{lemintersection}
		\label{lem:intersection}
		Let $ G $ be a simple graph, $ \Gamma $ a straight line drawing of $ G $, $ C $ the intersection graph of $ \Gamma $, and $ (T, \mathcal X) $ a tree decomposition of $ C $. For any fixed $ S \subseteq S(X_t), t\in T $, any two valid stub sets $ S_1, S_2 \subseteq S(V(C)) $ with maximum ink and $ S_1 \cap S(X_t) = S_2 \cap S(X_t) = S $ have equal
		ink value $ I(S_1 \cap S(V_t)) = I(S_2 \cap S(V_t))$.
	\end{restatable}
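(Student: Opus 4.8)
The plan is to prove this by a standard exchange argument, which is the workhorse for correctness proofs of bounded-treewidth dynamic programs. Suppose for contradiction that $S_1, S_2 \subseteq S(V(C))$ are both valid stub sets of maximum ink over all valid stub sets, with $S_1 \cap S(X_t) = S_2 \cap S(X_t) = S$, but $I(S_1 \cap S(V_t)) \neq I(S_2 \cap S(V_t))$; without loss of generality assume $I(S_1 \cap S(V_t)) > I(S_2 \cap S(V_t))$. The idea is to surgically replace the ``inside'' part of $S_2$ with the ``inside'' part of $S_1$, obtaining a valid stub set of strictly larger ink than $S_2$, contradicting the maximality of $S_2$. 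Concretely, define
\[
  S_2' := \bigl(S_2 \setminus S(V_t)\bigr) \cup \bigl(S_1 \cap S(V_t)\bigr).
\]
First I would check that $S_2'$ is again a valid stub set: it contains exactly one pair $(u,\ell)$ for every $u \in V(C)$, because the two pieces partition $V(C)$ according to membership in $V_t$, and for $u \in X_t$ the two pieces agree on $S$ so there is no conflict at the boundary. Then I would compute $I(S_2') = I(S_2) - I(S_2 \cap S(V_t)) + I(S_1 \cap S(V_t)) > I(S_2)$, which is the desired contradiction once validity is established.

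The main obstacle, and the only place where the tree-decomposition structure is actually used, is verifying that no two pairs in $S_2'$ intersect. Pairs both coming from $S_1 \cap S(V_t)$ do not intersect since $S_1$ is valid, and pairs both coming from $S_2 \setminus S(V_t)$ do not intersect since $S_2$ is valid; the crux is a ``cross'' pair $(u,\ell_u) \in S_1 \cap S(V_t)$ and $(v,\ell_v) \in S_2 \setminus S(V_t)$ with $u \in V_t$, $v \notin V_t$. If such a pair intersected, then $s(u)$ and $s(v)$ intersect in $\Gamma$, so $uv \in E(C)$, so some bag $X_{t'}$ contains both $u$ and $v$. Here I would invoke the connectivity property of tree decompositions: since $u \in V_t$ and $v \notin V_t$, and $v$ appears in a bag together with $u$, the standard argument (the subtree of bags containing $v$ must pass through $t$ on its way to $u$'s bags, or more directly: any edge between $V_t \setminus X_t$ and $V(C) \setminus V_t$ would violate the subtree-connectivity of one of the endpoints) forces both $u$ and $v$ to lie in $X_t$. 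But then $(u,\ell_u)$ and $(v,\ell_v)$ are both in $S$, so $\ell_u$ is the stub length prescribed by $S$ at $u$ and likewise $\ell_v$ at $v$; since $S \subseteq S_1$ is part of a valid set, these two do not intersect — contradiction. I would state this boundary-separation fact as a small claim and prove it carefully, as it is the real content of the lemma.

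Finally, I would note the symmetric conclusion: the argument shows $I(S_1 \cap S(V_t)) \le I(S_2 \cap S(V_t))$ cannot be strict in either direction among maximum-ink sets sharing the same trace $S$ on $X_t$, hence $I(S_1 \cap S(V_t)) = I(S_2 \cap S(V_t))$. This equality is exactly what licenses the dynamic program to store, for each node $t$ and each valid ``partial assignment'' $S$ on the bag $X_t$, a single number — the common ink value of the $V_t$-restriction of any maximum-ink extension — rather than tracking the full set. I expect the write-up to be short; essentially all the work is in the boundary-separation claim, and everything else is bookkeeping with the definitions of \emph{valid} and $I(\cdot)$ given just above the lemma statement.
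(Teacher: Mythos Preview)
Your approach is exactly the exchange argument the paper uses, and the overall structure is correct. There is, however, a small slip in your cross-pair analysis that you should fix before writing it up.

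In the cross case you take $(u,\ell_u)\in S_1\cap S(V_t)$ with $u\in V_t$ and $(v,\ell_v)\in S_2\setminus S(V_t)$ with $v\notin V_t$, and conclude that the separator property ``forces both $u$ and $v$ to lie in $X_t$.'' That is impossible: $v\notin V_t$ already excludes $v\in X_t$. The separator property only tells you that there are no edges between $V_t\setminus X_t$ and $V(C)\setminus V_t$, so from $uv\in E(C)$ you get $u\in X_t$ (and nothing about $v$). The argument is then completed not inside $S$ but inside $S_2$: since $u\in X_t$ and $S_1\cap S(X_t)=S=S_2\cap S(X_t)$, the pair $(u,\ell_u)$ also lies in $S_2$; and $(v,\ell_v)\in S_2$ by choice. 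Validity of $S_2$ now gives that the two pairs do not intersect. This is precisely what the paper means by ``$X_t$ separates the graph induced by $V_t\setminus X_t$ from the one induced by $V(C)\setminus V_t$ while the intersection of $S_1,S_2$ with $X_t$ is the same.'' With this correction your proof is complete and matches the paper's.
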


	As a consequence of Lemma~\ref{lem:intersection} we get that it suffices to store for every set of vertices $ V_t $ and a node $ t \in T $ the value of a maximum-ink valid stub set for the choices of vertices in $ S(X_t) $. Let $ t \in T $ and $ S \subseteq S(X_t) $ a stub set, then we define
	\[W(t, S) = \max \{I(\hat{S}) \mid \hat{S} \text{ is a valid stub set, } S \subseteq \hat{S} \subseteq S(V_t) \text{, and } \hat{S} \cap S(X_t) = S \}.\]
	If no such $ \hat{S} $ exists, we set $ W(t,S) = -\infty $.	In other words, $ W(t,S) $ is the maximum ink value achievable by any valid stub set in $ S(V_t) $ choosing $ S $ as the stub set for the vertices in $ X_t $. If the values $ W(t,S) $ are computed correctly for every $ t \in T $ we find the ink-value of a maximum-ink \sped by evaluating $ W(r,\emptyset) $ with $ r $ being the root of $ T $. Applying standard backtracking we can also reconstruct the stubs themselves. We now describe how to compute $ W(t,S) $ for every node type $ t \in T $ of a nice tree decomposition of $ C $. All the recursion formulas are stated here, as well as the correctness proof for the introduce nodes, the correctness proofs for the forget and join nodes can be found in Appendix~\ref{app:tw}.
	
	\medskip
	\noindent\textit{Leaf node.} Let $ t \in T $ be a leaf node and $ v \in X_t $ the vertex contained in its bag, then we store $ W(t,\{(v,\ell_i(v))\}) $ for each pair $ (v,\ell_i(v)) \in S(X_t) $ with $ i \in [0,\delta_v] $.

	\medskip	
	\noindent\textit{Introduce node.} Suppose next $ t \in T $ is an introduce node and $ t' $ its only child.  Let $ v \in X_t$ be the vertex introduced by $ t $, $ S \subseteq S(X_{t}) $, and $ i  \in [0,\delta_v]$, s.t., $ (v,\ell_i(v)) \in S $. If $ S $ is not a valid stub set we set $ W(t,S) = -\infty $, else 
	\begin{align*}
		W(t,S) = W(t',S\setminus \{(v,\ell_i(v))\}) + \ell_i(v).
	\end{align*}
	Correctness follows by considering a valid stub set $ \hat{S} $ whose maximum is attained in the definition of $ W(t,S) $. Then for the set $ \hat{S} \setminus \{(v,\ell_i(v))\} $ it follows that it is considered in the definition of $ W(t',S\setminus\{(v,\ell_i(v))\}) $ and hence we get 
	\begin{align*}
		 W(t',S\setminus\{(v,\ell_i(v))\}) &\geq I(\hat{S}\setminus \{(v,\ell_i(v))\}) = I(\hat{S}) - \ell_i(v) = W(t,S) - \ell_i(v) \\
		 W(t,S) &\leq W(t',S\setminus\{(v,\ell_i(v))\}) + \ell_i(v).
	\end{align*}

	On the other hand consider a valid stub set $ \hat{S}' $ for which the maximum is attained in the definition of $ W(t',S\setminus \{(v,\ell_i(v))\}) $. We need to argue that $ \hat{S}' \cup \{(v,\ell_i(v))\} $ is again a valid stub set. First, by assumption that $ S $ is a valid stub set, we immediately get that $ (v,\ell_i(v)) $ does not intersect any $ (u,\ell_u) \in S\setminus \{(v,\ell_i(v))\} = \hat{S}' \cap X_{t'} $. Additionally, by the properties of the nice tree decomposition, we get that $ v $ has no neighbors in $ V_{t'}\setminus X_{t'} $ and with $ \hat{S}' \setminus X_{t'} \subseteq V_{t'}\setminus X_{t'} $ we can conclude that $ \hat{S}' \cup \{(v,\ell_i(v))\}$ is a valid stub set. Furthermore it is considered in the definition of $ W(t,S) $ and we have that
	\begin{align*}
		W(t,S) \geq I(\hat S' \cup \{(v,\ell_i(v))\}) = I(\hat S') + \ell_i(v) = W(t', S \setminus \{(v,\ell_i(v))\}) + \ell_i(v).
	\end{align*}
 	
		\medskip
	\noindent\textit{Forget node.} Suppose $ t $ is a forget node and $ t' $ its only child such that $ X_t = X_{t'} \setminus \{v\} $ for some $ v \in X_{t'} $. Let $ S $ be any subset of $ S(X_t) $, if $ S $ is not a valid stub set we set $ W(t,S) = -\infty $, else
		$W(t,S) = \max \{W(t',S\cup\{(v,\ell_i(v))\}) \mid i = 0,\dots,\delta_v\}$.

	\medskip
	\noindent\textit{Join node.} Suppose that $ t $ is a join node and $ t_1, t_2 $ its two children with $ X_t = X_{t_1} = X_{t_2} $. Again let $ S $ be any subset of $ S(X_t) $. If $ S $ is not a valid stub set we set $ W(t,S) = -\infty $, else
		$W(t,S) = W(t_1,S) + W(t_2,S) - I(S)$.

	\medskip

	It remains to argue about the running time. Let $ m = |V(C)| $. We know there are $ O(\omega m) $ many nodes in the tree $ T $ of the nice tree decomposition~\cite{downey2012parameterized}. For each bag $ t \in T $ we know it has at most $ \omega + 1 $ many elements and each element has at most $ k + 1 $ many possible stubs, hence we have to compute at most $ (k+1)^{\omega + 1} $ values $ W(t,S) $ per node $ t \in T $. At each forget node we additionally need to compute the maximum of up to $ k $ entries. Consequently we perform $ O((k + 1)^{\omega + 2}) $ many operations per node $ t \in T $.	
	All operations can be implemented in $ O(k\omega) $ time. The only problematic one is to test a stub set for validity. 
	We use a modified version of the data structure used in the independent set algorithm by Cygan et al.~\cite{cygan2015parameterized}. See Appendix~\ref{app:tw} for details.
	
	\begin{theorem}\label{thm:treewidth_maxsped}
		Let $ G $ be a simple graph with $m$ edges and $ \Gamma $ a straight-line drawing of $ G $. If the intersection graph $ C $ of $ \Gamma $ has treewidth at most $ \omega \in \mathbb{N} $ and maximum degree $ k \in \mathbb{N} $, problem \maxsped can be solved in $ O(m(k+1)^{\omega + 2} \omega^2 + m \log m) $ time and space.
	\end{theorem}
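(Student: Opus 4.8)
\noindent\textit{Proof (plan).}
The plan is to establish, in order, (i) correctness of the four recurrences above as a bottom-up dynamic program over a nice tree decomposition, and (ii) the claimed time and space bounds. For (i) I would fix a nice tree decomposition $(T,\mathcal X)$ of $C$ rooted at some $r$ with $X_r=\emptyset$, of width $\omega$ and with $O(\omega m)$ nodes; such a decomposition is obtained from a given width-$\omega$ tree decomposition of $C$ in linear time~\cite{downey2012parameterized}. The statement to prove, by induction on $T$ from the leaves towards $r$, is exactly the definition of $W(t,S)$: it equals the maximum ink $I(\hat S)$ over valid stub sets $\hat S$ with $S\subseteq\hat S\subseteq S(V_t)$ and $\hat S\cap S(X_t)=S$, and $-\infty$ if there is none. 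The leaf and introduce node cases are already handled in the text above, so only the forget and join steps remain. For a forget node $t$ with child $t'$ and forgotten vertex $v$ one has $V_t=V_{t'}$, and every admissible $\hat S$ contains exactly one pair $(v,\ell_i(v))$; hence $\hat S$ is counted by exactly one term $W(t',S\cup\{(v,\ell_i(v))\})$, so maximising over $i$ is correct. For a join node $t$ with children $t_1,t_2$ one uses the tree-decomposition facts $V_{t_1}\cap V_{t_2}=X_t$ and that $C$ has no edge between $V_{t_1}\setminus X_t$ and $V_{t_2}\setminus X_t$: restricting an admissible $\hat S$ to each side gives valid stub sets $\hat S_1,\hat S_2$ agreeing with $S$ on $X_t$, and conversely gluing any two such sides yields a valid stub set on $V_t$ because no crossing pair can appear across the separation; since $\hat S_1\cap\hat S_2=S$ we get $I(\hat S)=I(\hat S_1)+I(\hat S_2)-I(S)$, which is the recurrence. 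Throughout, Lemma~\ref{lem:intersection} is what makes a single scalar $W(t,S)$ sufficient: two ink-maximal valid stub sets agreeing on $S(X_t)$ have equal ink inside $S(V_t)$, so the value handed to the parent does not depend on how the solution is completed outside $V_t$. Since $X_r=\emptyset$, a maximum-ink \sped of $\Gamma$ has ink $W(r,\emptyset)$, and standard backtracking through the recurrences recovers the stub lengths.

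For (ii) I would analyse the work node by node. The intersection graph $C$, with $|E(C)|\le km/2$, is computed in $O(m\log m+|E(C)|)$ time by a sweep line~\cite{bcko-cgaa-08}, and the nice tree decomposition has $O(\omega m)$ nodes. At each node the bag has at most $\omega+1$ vertices, each with at most $k+1$ relevant stub lengths, so there are at most $(k+1)^{\omega+1}$ stub sets $S\subseteq S(X_t)$ to process. The dominating cost per $S$ is the validity test; at forget nodes one additionally maximises over at most $k$ child-table entries, which does not dominate. I would argue that one validity test runs in $O(k\omega)$ time using a modified version of the incidence data structure of Cygan et al.~\cite{cygan2015parameterized} for weighted independent set, updated incrementally as stubs are introduced, forgotten, or resized along $T$. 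This gives $O((k+1)^{\omega+1}\cdot k\omega)=O((k+1)^{\omega+2}\omega)$ time per node and hence $O(m(k+1)^{\omega+2}\omega^2)$ over all nodes; adding the $O(m\log m)$ term for $C$ yields the claimed running time. The space is dominated by the DP tables, of total size $O(\omega m(k+1)^{\omega+1})$, which lies within the same bound.

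The step I expect to be the main obstacle is realising the $O(k\omega)$-time validity test: checking all $O((k\omega)^2)$ pairs of currently selected stubs in a bag from scratch would ruin the running time, so the data structure must maintain, incrementally along the tree decomposition, just enough overlap information to answer ``do any two selected stubs cross?'' in near-linear time while coping with the extra degree of freedom of $\delta_v+1$ possible stub lengths per vertex that is absent from the pure independent-set setting of Cygan et al. A secondary point requiring care is the gluing argument at join nodes, where one must rule out a crossing being hidden ``across'' the two branches; this is precisely where the edgeless separation $V_{t_1}\setminus X_t$ versus $V_{t_2}\setminus X_t$ is used. (The same dynamic program specialises to Theorem~\ref{thm:tree} when $\omega=1$, albeit with worse constants than the dedicated tree algorithm.) \qed
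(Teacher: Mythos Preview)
Your proposal is correct and follows essentially the same approach as the paper: the same recurrences over a nice tree decomposition, the same inductive correctness argument invoking Lemma~\ref{lem:intersection}, and the same modified Cygan et al.\ adjacency structure for the $O(k\omega)$ validity test. The only minor difference is bookkeeping in the running-time analysis---the paper first arrives at $O(m(k+1)^{\omega+3}\omega^2)$ and then shaves a factor $(k+1)$ by precomputing and iterating only over the valid stub sets at each bag, whereas your accounting reaches the stated bound directly by observing that the forget-node maximum is an $O(k)$ table lookup per $S$ and hence dominated by the validity check.
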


We remark that Theorem~\ref{thm:tree} shows a better running time in the case of $ C $ being a tree,
	than would follow from Theorem~\ref{thm:treewidth_maxsped} with $\omega=1$.
	Furthermore, since Theorem~\ref{thm:treewidth_maxsped} is exponential only in the treewidth $\omega$ of $C$, it implies that \maxsped is in the class \XP
	\footnote{The class \XP\ contains problems that can be solved in time $O(n^{f(k)})$, where $n$ is the input size, $k$ is a parameter, and $f$ is a computable function.} 
	when parametrized by~$\omega$.

	\subsection{Algorithms for \maxped}
	\label{sec:alg:maxped}
	Let $ C $ be the intersection graph in a \maxped problem. In contrast to \maxsped we need to consider more combinations of stub lengths since they are not necessarily symmetric anymore. In fact there are $ O(k^2) $ possible combinations of left and right stub lengths $ \ell_i(u), \ell_j(u) $ for a vertex $ u \in V(C) $. For the case of $ C $ being a tree our whole argumentation was based solely on the fact that we can subdivide the stub pairs into sets affecting the parent segment or not. This can also be done with the quadratic size sets of all stub pairs and we only get an additional factor of $ k $ in the running time. 
	\begin{corollary}
		\label{thm:maxped_tree}
		Let $ G $ be a simple graph with $m$ edges and $ \Gamma $ a straight-line drawing of $ G $. If the intersection graph $ C $ of $ \Gamma $ is a tree with maximum degree $ k \in \mathbb{N} $, then problem \maxped can be solved in $ O(mk^2 + m \log m) $ time and space.
	\end{corollary}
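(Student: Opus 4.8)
The plan is to run essentially the same bottom-up dynamic program as in the proof of Theorem~\ref{thm:tree}, but over the richer set of stub configurations required by \maxped. For a vertex $u \in V(C)$ of degree $\delta_u$, the $\delta_u$ intersection points of $s(u)$ with its neighbours split $s(u)$ into $\delta_u+1$ pieces, and since the erased part of an edge in a PED is a single contiguous central piece, an ink-maximal way of drawing $s(u)$ is obtained by choosing the left stub to end just before one of these intersection points (or at the right endpoint) and, \emph{independently}, the right stub to end just before one intersection point (or at the left endpoint), subject to the two stubs not overlapping; in addition $s(u)$ may be drawn as the whole segment. This yields $O(\delta_u^2) = O(k^2)$ relevant \emph{configurations} of $s(u)$. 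For a configuration $\sigma$ let $\ink(\sigma)$ denote the resulting stub length of $s(u)$ and let $\mathrm{cov}(\sigma)$ be the set of neighbours $v$ of $u$ whose crossing with $s(u)$ lies inside one of the two stubs of $\sigma$ (with $\mathrm{cov}(\sigma)$ the whole neighbourhood for the whole-segment configuration). For each $u$ and each configuration $\sigma$ we store $W_\sigma(u)$, the maximum ink of a PED of the subtree of $C$ rooted at $u$ in which $s(u)$ is drawn according to $\sigma$.

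Next I would set up the recurrence exactly along the lines of Recurrence~(\ref{rec:tree}): the only interaction between $u$ and a child $v$ is at their common crossing, and if this crossing is covered by $s(u)$ in configuration $\sigma$ (i.e.\ $v \in \mathrm{cov}(\sigma)$) then $s(v)$ must place its gap there, while otherwise $s(v)$ is unconstrained. Writing $\sollong(v) = \max_\sigma W_\sigma(v)$ and $\solshort(v)$ for the maximum of $W_\sigma(v)$ over the configurations $\sigma$ of $s(v)$ that leave a gap at the $u$--$v$ crossing, we obtain
\[
W_\sigma(u) \;=\; \ink(\sigma) \;+\; \sum_{v \in c(u)}
\begin{cases}
\solshort(v) & \text{if } v \in \mathrm{cov}(\sigma)\\
\sollong(v) & \text{otherwise,}
\end{cases}
\]
with the obvious base case at leaves. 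Correctness follows by the same induction as for Recurrence~(\ref{rec:tree}): the subtrees hanging off $u$ are independent once the behaviour of $s(u)$ at each shared crossing is fixed, and $\sollong(u)$ together with $\solshort(u)$ (now taken with respect to the $u$--$p(u)$ crossing) are precisely the two quantities the parent needs. Evaluating $\max_\sigma W_\sigma(r)$ at the root $r$ gives the optimum, and standard backtracking reconstructs an optimal PED.

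For the running time, I would avoid the naive $O(k^3)$ per vertex (one of $O(k^2)$ configurations times $O(k)$ children) by the incremental trick already used for Theorem~\ref{thm:tree}, now applied to the two-dimensional grid of configurations indexed by the two stub endpoints. Compute $W_\sigma(u)$ for one extreme configuration directly in $O(k)$ time; then enumerate all configurations by repeatedly advancing one of the two stub endpoints to the next intersection point. Each such move changes the coverage status of exactly one neighbour $v$ of $u$: if $v \in c(u)$, the term for $v$ switches between $\solshort(v)$ and $\sollong(v)$; if $v = p(u)$, the sum over children is unchanged (matching the analogous remark in Theorem~\ref{thm:tree}); in both cases $W$ and $\ink(\sigma)$ update in $O(1)$. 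Hence all $O(k^2)$ values $W_\sigma(u)$, and from them $\sollong(u)$ and $\solshort(u)$, are computed in $O(k^2)$ time per vertex, for $O(mk^2)$ in total; adding the $O(m \log m)$ time to build the intersection graph $C$ (Section~\ref{sec:preliminaries}) yields the claimed time bound, and the data structures use the same space.

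The step I expect to need the most care is the configuration bookkeeping in the asymmetric setting: checking that restricting the left and right stubs to end just before intersection points (together with the single whole-segment configuration) loses no ink, that the resulting configurations genuinely form a grid on which a single-endpoint move flips the status of a unique neighbour, and that the predicate ``$s(v)$ leaves a gap at the $u$--$v$ crossing'' defining $\solshort$ as well as $\mathrm{cov}(\sigma)$ are handled consistently at the grid boundary (overlapping or degenerate stubs, whole segment). Once these details are fixed, both the correctness argument and the incremental running-time analysis are routine adaptations of the tree case.
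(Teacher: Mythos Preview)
Your proposal is correct and follows essentially the same approach as the paper: replace the $O(k)$ symmetric stub lengths per vertex by the $O(k^2)$ left/right stub combinations, keep the same $\solshort$/$\sollong$ bottom-up recurrence, and argue that this costs only one extra factor of $k$. The paper states this rather tersely (``our whole argumentation was based solely on the fact that we can subdivide the stub pairs into sets affecting the parent segment or not''), whereas you spell out the two-dimensional incremental update explicitly; the substance is the same.
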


	In case of $ C $ having bounded treewidth we again did never depend on the symmetry of the stubs, but only on them forming a finite set for each vertex. Consequently we can again just use these quadratic size sets of stub pairs, adding a factor of $ k+1 $ compared to \maxsped, and obtain the following.
	\begin{corollary}
		Let $ G $ be a simple graph with $m$ edges and $ \Gamma $ a straight-line drawing of $ G $. If the intersection graph $ C $ of $ \Gamma $ has treewidth at most $ \omega \in \mathbb{N} $ and maximum degree $ k \in \mathbb{N} $, problem \maxped can be solved in $ O(m (k+1)^{\omega + 3} \omega^2 + m \log m) $ time and space.
	\end{corollary}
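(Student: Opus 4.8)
The plan is to reuse the dynamic program of Section~\ref{sec:btreewidth} almost verbatim, the only change being that single stub lengths are replaced throughout by \emph{stub pairs}. For a vertex $u\in V(C)$, a stub pair is a pair $(\ell_i(u),\ell_j(u))$ recording the lengths of the two stubs of $s(u)$ at its two endpoints, subject to the two stubs being internally disjoint so that together they form a valid partial drawing of $s(u)$ (the whole segment being the extreme case). As in the discussion preceding Corollary~\ref{thm:maxped_tree}, only $O((k+1)^2)$ such pairs per vertex are relevant, since both endpoints of the undrawn gap may be assumed to lie at intersection points of $s(u)$ with its neighbours or at the endpoints of $s(u)$. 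I would then define, for $V'\subseteq V(C)$, a \emph{stub-pair set} $P(V')$ containing exactly one stub pair per vertex of $V'$; say that two stub pairs \emph{intersect} if the induced partial drawings of the two corresponding segments share a point; call $P(V')$ \emph{valid} if no two of its members intersect, i.e., it induces a PED of the sub-drawing $\{s(u):u\in V'\}$; and let $I(P(V'))$ be the total length of all stubs occurring in $P(V')$.

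With this vocabulary in place, every ingredient of Section~\ref{sec:btreewidth} transfers without conceptual change, and the bulk of the work would be to verify this step by step. First, the analogue of Lemma~\ref{lem:intersection} for stub-pair sets holds with the identical proof: it uses only that a bag $X_t$ separates $V_t\setminus X_t$ from $V(C)\setminus V_t$ in the intersection graph and that edge intersections are a local property, never the symmetry of the stubs. Hence it again suffices to tabulate, for each node $t$ and each $P\subseteq P(X_t)$, the quantity $W(t,P)=\max\{I(\hat P)\mid \hat P \text{ valid},\ P\subseteq\hat P\subseteq P(V_t),\ \hat P\cap P(X_t)=P\}$ (and $W(t,P)=-\infty$ if no such $\hat P$ exists). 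The recurrences for leaf, introduce, forget, and join nodes are then word for word those of Section~\ref{sec:btreewidth}, with $S$ replaced by $P$ and $\ell_i(v)$ replaced by the sum of the two stub lengths of the pair chosen for $v$; correctness at introduce nodes still follows from the separation property of nice tree decompositions, and correctness at join nodes still follows from the inclusion--exclusion identity $W(t,P)=W(t_1,P)+W(t_2,P)-I(P)$, with neither argument invoking symmetry. The optimum is read off at the root as $W(r,\emptyset)$ and an optimal PED is recovered by standard backtracking.

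The one place that requires genuinely new, if still routine, work — and the step I expect to be the main obstacle — is the validity-testing data structure of Appendix~\ref{app:tw}: a stub pair of $s(u)$ can now be hit by a neighbour either on its left part or on its right part, so the structure must record, for each intersection point along $s(u)$, whether it currently falls into a drawn stub or into the gap. I would adapt the structure of Cygan et al.~\cite{cygan2015parameterized} accordingly, keeping the cost of handling each relevant intersection at $O(\omega)$ up to a constant factor for treating the two sides, so that a stub-pair set can still be tested and updated within the same asymptotic bound as in Theorem~\ref{thm:treewidth_maxsped}. Finally, feeding the bound of $O((k+1)^2)$ relevant configurations per vertex into the node-by-node running-time analysis underlying Theorem~\ref{thm:treewidth_maxsped} — $O(\omega m)$ nodes of the nice tree decomposition, one extra factor $k+1$ relative to \maxsped from the enlarged per-vertex configuration space indexed in each bag, and $O(\omega^2)$ time per tabulated value — yields total time and space $O(m(k+1)^{\omega+3}\omega^2)$, to which $O(m\log m)$ for constructing $C$ is added, as claimed.
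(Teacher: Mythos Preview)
Your proposal is correct and follows essentially the same approach as the paper. The paper's own justification for this corollary is just two sentences observing that the treewidth dynamic program of Section~\ref{sec:btreewidth} never exploits the symmetry of the stubs, so one may replace the $O(k+1)$ symmetric stub lengths per vertex by the $O((k+1)^2)$ stub pairs and pick up one additional factor of $k+1$; you have simply spelled out in detail (the analogue of Lemma~\ref{lem:intersection}, the four node recurrences, and the adaptation of the validity-testing structure) what the paper leaves implicit.
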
	

	\section{Experiments}\label{sec:experiments}
	We implemented and tested the tree decomposition based algorithms.\footnote{\url{https://www.ac.tuwien.ac.at/partial-edge-drawing/}}
	To compute the nice tree decomposition we used the ``htd'' library~\cite{DBLP:conf/cpaior/AbseherMW17} version $ 1.2 $,
	compiled with gcc version 8.3.
	The algorithm itself  was implemented in Python3.7, using the libraries\footnote{\url{https://networkx.github.io/} and \url{https://github.com/Toblerity/Shapely}} NetworkX 2.3 and Shapely 1.6.
	To run the experiments we used a cluster, each node equipped with an
	Intel Xeon E5-2640 v4 processors clocked at 2.4GHz, 
	160GB of Ram, 
	and operating Ubuntu 16.04.
	Each run had a memory limit of at most 80GB of RAM.
	
	We generated random graphs using the NetworkX \texttt{gnm} algorithm. 
	The graphs have $n=40$ vertices and between $m=40$ and $75$ edges in increments of 5. This makes a total of 800 graphs, 100 for each $m \in \{40, 45, \dots, 75\}$.
	For the layouts we used the NetworkX implementation of the spring embedder by Fruchterman and Reingold~\cite{DBLP:journals/spe/FruchtermanR91} and the graphviz\footnote{\url{https://www.graphviz.org/}} implementation ``circo'' of a circular layout, version 2.40.1.
	We could successfully run \maxsped for all but four of the spring layouts and for all circle layouts with up to $ 60 $ edges.
	
	\begin{figure}[tbp]
		\includegraphics[trim=0 10 0 0]{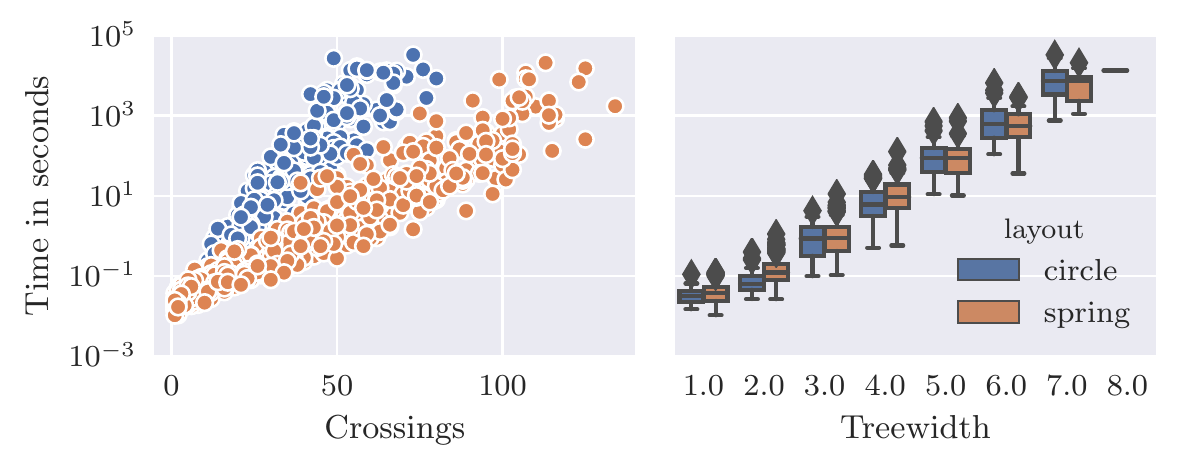}
		\caption{Experimental results for the \maxsped algorithm.}
		\label{fig:maxsped_plot}
	\end{figure}

	Since the time complexity of the algorithm depends exponentially on the treewidth of the intersection graph, we evaluated the running time relative to treewidth and number of crossings, %
	see Fig.~\ref{fig:maxsped_plot}.
	The results are as expected, with the runtime quickly increasing to about 16 minutes (1,000 sec) for 
	80 crossings in case of the spring
	and 50 crossings in case of the circle layouts -- or for a treewidth of $ 6 $ for both layouts.
	On the other hand we can handle up to 50 crossings and a treewidth of 4 for the spring layouts in about 10 seconds.
	The discrepancy in the runtime relative to the number of crossings between spring and circle layouts likely comes from different numbers of crossings per edge.
	To confirm this we took for each intersection graph its maximum degree $k$ divided by the total number of input crossings.
	For the spring layouts this resulted in a ratio of $ 0.24 $ and for the circle layouts $ 0.33 $.
	Recall that the running time is dominated by $ O((k + 1)^{\omega+2}) $. %
	Hence an increase by a of factor 1.5 in the aforementioned value also results in an additional factor of $ 1.5^{\omega+2} $ in the asymptotic running time.
	Concerning ink, for the circle layouts an average of 84\% ($\sigma = 0.09$) and for spring layouts an average of 90\% ($\sigma=0.06$) of the ink could be preserved.
	For \maxped we conducted the same experiments. In general one can say that the additional factor of $ (k+1)^\omega $ makes a big difference. For details see Appendix~\ref{app:ex}.

	In summary, our experiment confirmed the predicted running time behavior and  showed that the amount of removed ink was moderate. Moreover, the ``htd'' library~\cite{DBLP:conf/cpaior/AbseherMW17} performed very well for computing a nice tree decomposition so that we could focus on implementing the dynamic programming algorithm itself.

	\section{Conclusion}
	We extended the work by Bruckdorfer et al.~\cite{bcgkmn-pped-17} and showed \NP-hardness for the \maxped and \maxsped problems, as well as polynomial-time algorithms for the case of the intersection graph of the input drawing being a tree or having bounded treewidth.
	For the latter, our proof-of-concept implementation worked reasonably well for small to medium-size instances.
	
	An interesting open problem is to close the gap for \maxped. While we showed it to be \NP-hard for $ k \geq 4 $ and it is easy to solve for $ k \leq 2 $~\cite{bk-ecbe-12}, the case of $ k = 3 $ remains open. Another direction is to consider the existential question for homogeneous (but non-symmetric) PEDs with a fixed ratio $ \delta $, for which we can freely distribute the $\delta$ fraction of the ink to both stubs. We expect that our algorithms for trees and intersection graphs of bounded treewidth extend to that case, but we could not resolve if the problem is \NP-hard or not.

	\bibliographystyle{splncs04}
	\bibliography{paper}
	
	\newpage
	\appendix
	
	\section{Omitted Proofs from Section~\ref{sec:hardness_maxped}}
	\label{app:hardness}
	Let $ s $ be a line segment in the plane. 
	We define the fractions $ l(s), r(s) \in [0,1], l(s) + r(s) \leq 1 $, such that $ l(s) $ is the fraction of $ s $ drawn from its left endpoint and $ r(s) $ the fraction drawn from its right endpoint. Note that already Bruckdorfer and Kaufmann~\cite{bk-ecbe-12} observed that any 2-plane drawing admits a PED where for every segment $ s $ we find $ l(s) + r(s) = 1 - \varepsilon $ for $ \varepsilon > 0 $. In the following we will omit this $ \varepsilon $ and simply assume it is drawn with its full ink value of $ |s| $, i.e., $ l(s) + r(s) = 1 $. This will simplify the calculations in the proof of Theorem~\ref{thm:maxped_hard} significantly. Still the information where the gap of size $ \varepsilon $ would be placed is preserved by evaluating if $ l(s) > r(s) $ or $ r(s) > l(s) $.
	
	\thmmaxpedhard*
	\begin{proof}

		Let $\phi$ be a planar \textsc{3-Sat} formula with $n$ variables $\{x_1, \dots, x_n\}$ and $m$ clauses $\{c_1, \dots, c_m\}$, each consisting of three literals.
		Analogously to Section~\ref{sec:hardness} we assume that $\phi$ comes with a polynomial-size planar drawing of its variable-clause graph $H_\phi$, which has a vertex for each variable $x_i$ and a vertex for each clause $c_j$.
		Similarly to above we create a 4-plane drawing $\Gamma_\phi$ (given as a set of line segments) such that $\Gamma_\phi$ has a PED with ink at least $L \in \mathbb{N}$ if and only if $\phi$ is satisfiable. 
		
		Let $ p $ be the maximum number of occurrences of any variable $ x $ in $ \phi $. Then for a variable $x$ of $\phi$, we create a gadget $\mathcal{G}(x)$ consisting of a cycle of line segments $\{s_1, s_2, \dots, s_p\}$, where each pair of segments $s_i$ and $s_{i+1} $ intersect as well as the pair $s_p$ and $s_1$. 
		Observe that every $ \mathcal G(x) $ is 2-planar and hence admits a PED with $ I(\mathcal G(x)) = \sum_{i = 0}^p |s_i|$.
		Further, a gadget $\mathcal{G}(x)$ admits exactly two different PEDs of this kind, one where for all segments $s_v$ it holds that $l(s_v) < r(s_v)$ and one where the inverse holds, see  Fig.~\ref{fig:PED_red_vg}. We associate these two configurations with $x$ being set to \emph{true} or \emph{false}, respectively.
		
		\begin{figure}[tbp]
			\centering
			\subfloat[\emph{true} state]{
				\includegraphics{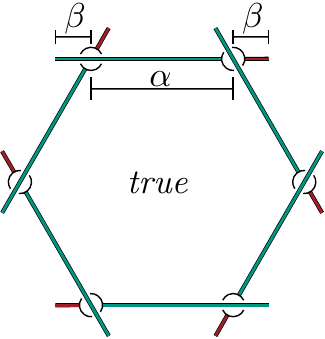}
				\label{fig:PED_red_vg_true}
			}
			\qquad
			\subfloat[\emph{false} state]{
				\includegraphics{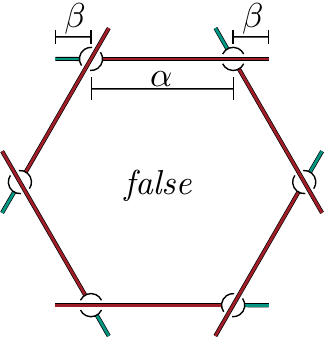}
				\label{fig:PED_red_vg_false}
			}
			\caption{Ink-maximal PED of the variable gadget in both truth states; left stubs are green, right stubs are red.}
			\label{fig:PED_red_vg}
		\end{figure}
		
		It remains to fix the length of each segment $ s_i $ in a variable gadget $ \mathcal G(x) $.  We subdivide such a segment into three parts. The ones left and right of the intersection points with $ s_{i-1} $ and $ s_{i+1} $ and the one in between. Let $ \alpha, \beta \in \mathbb{R}^+ $ with $ \alpha \gg \beta $. The former ones are of length $ \beta $ each and the latter one of length $ \alpha $. Then $ |s_i| = \alpha + 2\beta $ and for one variable gadget we get the maximum ink value $I(\mathcal{G}(x)) = |\mathcal{G}(x)|\cdot(\alpha + \beta)$ in any PED.

		Next we describe the clause gadget. Let $c$ be a clause in $\phi$ and $\mathcal{G}(c)$ its gadget. As mentioned above we use a similar triangular arrangement as in the proof of Theorem~\ref{thm:hard}. We split the gadget into multiple components. The three segments forming the central triangle we name the \emph{internal component} of $\mathcal{G}(c)$, compare Fig.~\ref{fig:PED_red_cg_complete}. In contrast to the proof of Theorem~\ref{thm:hard} we extend every such segment and intersect it with a so called \emph{weight component}. Such a weight component is a cycle made up of four segments. 
		
		More precisely we subdivide one segment $ s_t $ of the internal component into four parts. The two outermost ones are of length $ \beta $ each, the one between the two intersection points with the internal component is of length $ \alpha $, and the fourth one of length $ 3\alpha $, which we will call the \emph{connecting section} of $ s_t $, see Fig.~\ref{fig:PED_red_cg_complete}. For a segment $ s_w $ of the weight component we pick its length left and right of the intersections with the weight component to be of length $ \beta $ and the one enclosed by the two intersection points to be of length $ 18\alpha $. Finally let let $ s_w $ be the segment intersecting with $ s_t $, then their point of intersection is exactly at the midpoint of $s_w$.
		
		In total the clause gadget has $ I_{total}(\mathcal G(c)) =  228\alpha + 30\beta $ units of ink available. As we will argue below the maximum ink value of any PED configuration is at most 
		\begin{align*}
		I(\mathcal{G}(c)) = 3\cdot(72\alpha + 8\beta) + 5\alpha + 6\beta = 221\alpha + 30\beta
		\end{align*}

		\begin{figure}[tbp]
			\centering
			\subfloat[complete clause gadget]{
				\includegraphics{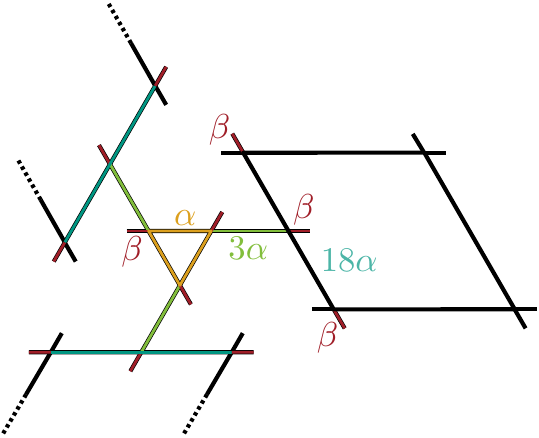}
				\label{fig:PED_red_cg_complete}
			}
			\subfloat[optimal ink-value]{
				\includegraphics{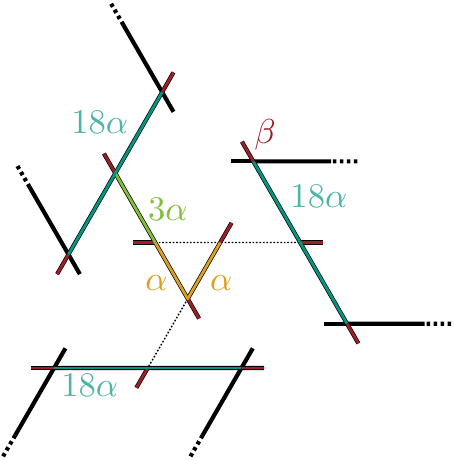}
				\label{fig:PED_red_cg_optimal}
			}
			\caption{A straight-line graph drawing (a) and a maximum-ink SPED (b) of the same graph.}
			\label{fig:PED_red_cg}
		\end{figure}

		Therefore we find $ I_{total}(\mathcal G(c)) - I(\mathcal G(c)) = 7\alpha$. The value $ I(\mathcal G(c)) $ can be obtained only  if we draw the weight components with their maximum ink value, i.e., each segment $ s_t $ of the internal component has to have a gap at the intersection between $ s_t $ and the corresponding segment $ s_w $ of the weight component intersected by $ s_t $, see Fig.~\ref{fig:PED_red_cg_optimal}. If we place the gap of $ s_t $ anywhere else, we lose at least $ 9\alpha $ units of ink in its weight component. This already exceeds the difference of $ 7\alpha $. Since one connecting section has $ 3\alpha $ units of ink, it is never beneficial to draw all three yellow sections of the central triangle in Fig.~\ref{fig:PED_red_cg_optimal}, since this would again lose at least $ 9\alpha $ units of ink.

		The final component of our reduction is the connection between a variable $x$ and a clause $c$ in which $ x $ occurs in the formula $\phi$. This gadget we call a \emph{wire gadget} $\mathcal{G}(x,c)$. One $ \mathcal G(x,c) $ is a set of $z$ line segments $\{s_1, s_2, \dots, s_{z}\}$ for some $ z \in \mathbb{N} $. Each line segment $ s_i \in \mathcal G(x,c), i = 1,\dots,z-1 $  intersects the next line segment $s_{i+1}$. Further, $s_1$ intersects a segment in $\mathcal{G}(x)$ and $s_{z}$ intersects a segment in $\mathcal{G}(c)$.  All segments $ s_i \in \mathcal G(x,c) $ have length $|s| = \alpha/2 + 2\beta$, where again the length left and right of the intersection points is $ \beta $ and the length between the two intersection points is $\alpha/2$.  We will call the latter part the \emph{middle section} of a segment $s_i \in \mathcal G(x,c)$. We call the endpoint of $s_i$ that is next to the intersection with $s_{i+1}$, the \emph{right endpoint} of $s_i$, and the stub connected to this endpoint, the \emph{right stub}.
		
		Let $\mathcal G(x,c)$ be the wire gadget for some variable $ x $ appearing in clause $ c $ in $ \phi $. Then $ s_1 \in \mathcal G(x,c) $ intersects a segment $t_i \in \mathcal{G}(x)$. If $x$ occurs as a positive literal in $c$, the intersection point has a distance of $\beta$ to the intersection point between $t_i$ and $t_{i+1}$. If $x$ occurs as a negative literal in $c$, the intersection point has a distance of $\alpha-\beta$ to the intersection point between $t_i$ and $t_{i+1}$, compare the two cases in Fig~\ref{fig:PED_red_vg_wire}.
		
		Now let $r_j \in \mathcal{G}(c)$ be a segment of the internal component of $\mathcal{G}(c)$. Then $ s_z \in \mathcal G(x,c) $ intersects $r_j$ in the connecting section, at a distance of $\beta$ to the intersection with the next segment of the internal component, see Fig.~\ref{fig:PED_red_cg_wire}.
		Observe that each of the three wire gadgets intersecting a clause gadget is intersecting a different connecting section.
		The maximum ink value of a wire gadget $\mathcal{G}(x, c)$ in a PED is	$I(\mathcal{G}(x, c)) = |\mathcal{G}(x, c)|(\alpha/2 + 2\beta)$.

		\begin{figure}[tbp]
			\centering
			\subfloat[Wires at variable gadget]{
				\includegraphics{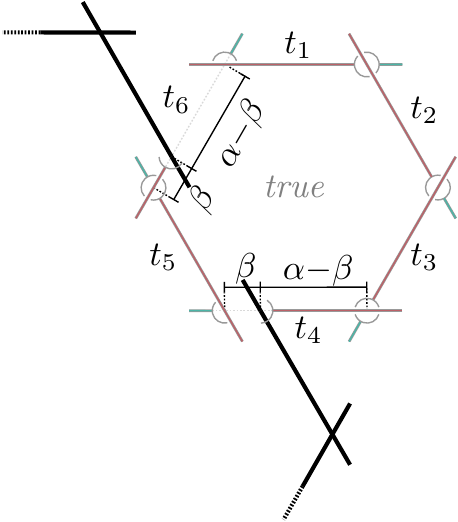}
				\label{fig:PED_red_vg_wire}
			}
			\qquad
			\subfloat[Wires at clause gadget. The red circle marks the gap being placed at the clause.]{
				\includegraphics{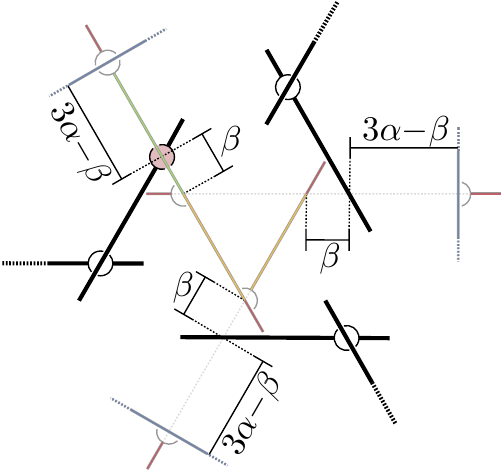}
				\label{fig:PED_red_cg_wire}
			}
			\caption{Wire connections at the variable gadget (a) and the clause gadget (b).}
			\label{fig:PED_red_wires}
		\end{figure}

		We now analyze the possible ways a wire gadget can be drawn in an ink-maximal PED of $ \Gamma_\phi $.
		First let $\mathcal{G}(x,c)$ be a wire gadget for variable $ x $ and clause $ c  $ and assume $ s_z \in \mathcal G(x,c) $ is drawn with a gap at the intersection with $ s \in \mathcal G(c) $ and let the connecting section of $ s $ be drawn. Since we can ever only draw one connecting section of the internal component of $\mathcal{G}(c)$, the two other wire gadgets intersecting segments in $ \mathcal G(c) $ are free to have their last segments drawn completely. Let $\mathcal G(x',c') $ be one of those two wires. We can draw the middle section of every segment $s_i' \in \mathcal G(x',c')$ as part of the right stub. 
		Now $s'_0 \in \mathcal G(x',c')$ has an intersection point with $ r'\in \mathcal G(x') $, splitting $s'_0$ into a left part with length $\beta$ and a right part with length $\alpha/2 + \beta$. Therefore $\mathcal G(x',c')$ has a  PED with $ I(\mathcal G(x',c')) = \sum_{i = 0}^z |s_i|, s_i \in \mathcal G(x',c') $, i.e., all segments can be drawn with full ink. We are left to consider the one wire with a gap at the clause gadget. 

		Let $\mathcal G(x,c)$ be this wire gadget for a variable $ x $ and clause $ c $ in $ \phi $. Then $ s_z \in \mathcal G(x,c) $ has two options at the intersection with $ t \in \mathcal{G}(c)$. Either the middle section of $s_z$ is drawn as part of its right stub, in which case we lose at least $2\alpha$ at $\mathcal{G}(c)$ or it is drawn as part of its left stub, in which case we do not lose any ink, but instead force the middle section of each other segment $s_i$ of $\mathcal G(x,c)$ to be draw as part of its left stub. In particular this means that the middle section of $s_0$ can only be drawn as part of its left stub. This leaves again two options. Either $s_0$ omits its middle section and is only drawn with its two $ \beta $-sized stubs, resulting in a loss of $\alpha/2$ units of ink, or we draw the middle section of $s_0$ as part of its left stub, in which case we either lose $\beta$ or $\alpha - \beta$, depending on the chosen state of $\mathcal{G}(x)$.
		
		We can now compute the exact optimal ink value $L$ for a maximum-ink PED of $\Gamma_\phi$ by adding $I(\mathcal{G}(x))$ for every variable in $\phi$, $I(\mathcal{G}(c))$ for every clause in $\phi$, $I(\mathcal{G}(x,c))$ for every literal in $\phi$ and finally subtracting $\beta$ for every clause in $\phi$.
		
		\begin{align*}
		L = n\cdot I(\mathcal{G}(x)) + m\cdot(I(\mathcal{G}(c) )+ 3\cdot I(\mathcal{G}(x,c)) - \beta)
		\end{align*}
		
		We claim that $\Gamma_\phi$ has a PED with an ink value of at least $L$, if and only if $\phi$ is satisfiable. We will now argue the space requirements and correctness of our reduction now.
		For the space requirements we can use a similar approach as in Theorem~\ref{thm:hard}. 
		We draw all gadgets with their segment endpoints and intersection points on a triangular grid, using $ \beta $ as the unit length of this grid.
		Further we set $ \alpha $ as a suitably large even multiple of $ \beta $.
		Since the original drawing of $ H_\phi $ was of polynomial size this yields a drawing of polynomial size in the number of clauses and variables with integer coordinates in the triangle grid.

		Assume $\phi$ has a satisfying variable assignment. Then every clause $c$ contains at least one  literal set to true. First assume this literal is positive then the corresponding variable $x$ is set to true. We draw $\mathcal{G}(x)$ in its true state (see Fig.~\ref{fig:PED_red_vg}), then we can draw $s_0$ of $\mathcal{G}(x,c)$ with its gap at the intersection with $s_1$ and we only lose $\beta$ units of ink; analogously for the case of a negative literal and $ x $ set to false. In either case we can place the gap of $s_z$ at its intersection with $r_j \in \mathcal{G}(c)$, which allows to draw the connecting section of $ r_j $. As argued above, the other two wire gadgets can be drawn with maximum ink. Since we started with a satisfying assignment, this holds for every $ \mathcal G(c) $ and therefore $\Gamma_\phi$ has a \ped $ D $ with $I(D) \geq L$.

		Now conversely assume $\Gamma_\phi$ has a \ped $ D $ with $ I(D) \geq L$. Construct a variable assignment of $ \phi $ by assigning to each variable $ x $ in $ \phi $ the state derived from the corresponding gadget $ \mathcal G(x) $. Assume for contradiction that this assignment is not satisfying. Then by construction we find three variable gadgets $ \mathcal G(x_1),\mathcal G(x_2)$, and $\mathcal G(x_3) $ connected to one clause gadget $ \mathcal G(c) $, s.t., $ c $ is not satisfied by the assignment derived for $ x_1, x_2, x_3 $. As argued above we know exactly one such connection $ \mathcal G(x_i,c) $ is of the form that $ s_z \in \mathcal G(x_i,c) $ places its gap at the intersection with the corresponding segment in $ \mathcal G(c) $. Consequently we find that $ s_1\in \mathcal G(x_i,c) $ forces the corresponding segment $ t \in \mathcal G(x_i) $ to be drawn without its $ \alpha $ units long section, but this contradicts $ I(D) \geq L $. Alternatively we could omit the middle section for some $ s_j \in \mathcal G(x_i,c), 1 < j < z $, but then again we would lose $ \alpha $ units of ink, which contradicts $ I(D) \geq L$.
\end{proof}

	\section{Omitted Proofs from Section~\ref{sec:btreewidth}}
	\label{app:tw}
	\lemintersection*
	\begin{proof}
		Let $ S_1, S_2 \subseteq S(V(C)) $ be two maximum ink valid stub sets with $ S = S_1 \cap S(X_t) = S_2 \cap S(X_t) $. Assume for contradiction $ I(S_1 \cap S(V_t)) > I(S_2 \cap S(V_t)) $. We can construct a new solution $ S_2' $ by exchanging the stubs in $ S_2 \cap S(V_t \setminus X_t) $ by the stubs in $ S_1 \cap S(V_t\setminus X_t) $. Then $ S_2' $ is again a valid stub set, since $ X_t $ separates the graph induced by $ V_t \setminus X_t $ from the one induced by  $ V(C) \setminus X_t $ while the intersection of $ S_1,S_2 $ with $ X_t $ is the same. By assumption though we find $ I(S_2') > I(S_2) $, a contradiction.
	\end{proof}

	\paragraph*{Forget node.} Suppose $ t $ is a forget node and $ t' $ its only child such that $ X_t = X_{t'} \setminus \{v\} $ for some $ v \in X_{t'} $. Let $ S $ be any subset of $ S(X_t) $, if $ S $ is not a valid stub set we set $ W(t,S) = -\infty $, else
	\begin{align*}
		W(t,S) = \max \{W(t',S\cup\{(v,\ell_i(v))\}) \mid i = 0,\dots,\delta_v\}.
	\end{align*}

	Let $ \hat{S} $ be a valid stub set whose maximum is attained in $ W(t,S) $ and $ i \in [0,\delta_v] $, s.t., $ (v,\ell_i(v)) \in \hat{S} $. Then $ \hat{S} $ is considered in $ W(t', S\cup \{(v,\ell_i(v))\}) $ and hence $ W(t', S\cup \{(v,\ell_i(v))\}) \geq I(\hat{S}) = W(t,S) $ and it follows
	\begin{align*}
		W(t,S) \leq \max \{W(t',S\cup\{(v,\ell_i(v))\}) \mid i = 1,\dots,\delta_v\\ \text{ and } S\cup\{(v,\ell_i(v))\} 	\text{ is a valid stub set}\}.
	\end{align*}

	On the other hand, observe that every stub set in the definition  of $ W(t',S \cup \{(v,\ell_i(v))\}) $ for some $ i \in [0,\delta_v] $ is also considered in the definition of $ W(t,S) $ and hence
	\begin{align*}
		W(t,S) \geq \max \{W(t',S\cup\{(v,\ell_i(v))\}) \mid i = 1,\dots,\delta_v\\ \text{ and } S\cup\{(v,\ell_i(v))\} \text{ is a valid stub set}\}.
	\end{align*}

	\paragraph*{Join node.} Suppose that $ t $ is a join node and $ t_1, t_2 $ its two children with $ X_t = X_{t_1} = X_{t_2} $. Again let $ S $ be any subset of $ S(X_t) $. If $ S $ is not a valid stub set we set $ W(t,S) = -\infty $, else
	\[
	W(t,S) = W(t_1,S) + W(t_2,S) - I(S)
	\]

	Let $ \hat{S} $ be a valid stub set whose maximum is attained in $ W(t,S) $. Let $ \hat{S}_1 = \hat{S} \cap S(V_{t_1}) $ and $ \hat{S}_2 = \hat{S} \cap S(V_{t_2}) $. Since $ \hat{S} $ is a valid stub set we find that $ \hat{S}_1 $ and $ \hat{S}_2 $ are also valid stub sets and further $ \hat{S}_1 \cap S(X_{t_1}) = S $. Hence $ \hat{S}_1 $ is considered in $ W(t_1, S) $ and $ W(t_1, S) \geq I(\hat{S}_1) $. Analogously $ W(t_2, S) \geq I(\hat{S}_2) $ and we find with $ \hat{S}_1 \cap \hat{S_2} = S $ that
	\begin{align*}
		W(t,S) = I(\hat{S}) = I(\hat{S}_1) + I(\hat{S}_2) - I(S) \leq W(t_1,S) + W(t_2, S) - I(S).
	\end{align*}

	Now let $ \hat{S}_1' $ and $ \hat{S}_2' $, be valid stub sets for which the maximum is attained in $ W(t_1,S) $ and $ W(t_2,S) $, respectively. We know that there is no edge between any two vertices in $ V_{t_1}\setminus X_t $ and $ V_{t_2} \setminus X_t $. Consequently no two pairs $ (u,\ell_u) \in \hat{S}_1' $ and $ (v,\ell_v) \in \hat{S}_2' $ intersect and hence $ \hat{S}' := \hat{S}_1' \cup \hat{S}_2' $ is a valid stub set. Moreover $ \hat{S}' \cap S(X_t) = S $ implies that $ \hat{S}' $ is considered in $ W(t,S) $ and we obtain
	\begin{align*}
		W(t,S) \geq I(\hat{S}') = I(\hat{S}_1') + I(\hat{S}_2') - I(S) = W(t_1,S) + W(t_2, S) - I(S).
	\end{align*}

	\paragraph{Implementation details.}Almost all operations can directly be implemented in $ O(k\omega) $ time. The only problematic one is to test a stub set for validity. A naive implementation would check the intersection of $ O(km) $ entries, resulting in a running time quadratic in $ m $. Instead we use a modified version of the data structure used in the independent set algorithm by Cygan et al.~\cite{cygan2015parameterized}. See Appendix~\ref{app:tw} for details.
	We consider for a vertex $ u \in V(C) $ the bag $ X_t, t \in T $ that is closest to the root of $ T $. For any vertex $ v \in V(C) $, $ uv \in E(C) $, and $ v \in X_t $ we add $ v $ to the set $ A_u $ of adjacent vertices of $ u $ stored at $ t $. These sets can be constructed in $ O(\omega m) $ time and space and a query for one adjacency now only takes $ O(\omega) $ time. Here we enrich this structure by adding to each entry in $ A_u $ the index pair $ 1\leq i,j \leq k $ such that the segments corresponding to $ u $ and $ v $ do not intersect if stubs $ \ell_i(u) $ and $ \ell_j(v) $ are chosen. We can then query in $ O(\omega) $ time if two stubs $ (u,\ell_{i'}) $ and $ (v, \ell_{j'}) $ intersect or not. To compute the enriched structure we get an additional factor of $ k $, resulting in a preprocessing time of $ O(k\omega m) $. 
	
	The above description leads to an immediate running time of $ O(m (k+1)^{\omega + 3}\omega^2) $ for the recurrence. 
	We can save a factor of $ k + 1 $ by storing for each node the sets $ S \subseteq S(X_t) $ that are valid stub sets. Then for every node we simply iterate those sets. As a result it is no longer necessary to check for validity in case we are processing a forget or join node. Including the preprocessing time of $ O(m\log m) $ to compute the intersection graph $ C $ we obtain the following result.
	
	\section{Experiments for \maxped}\label{app:ex}
	\begin{figure}[tbh]
		\includegraphics{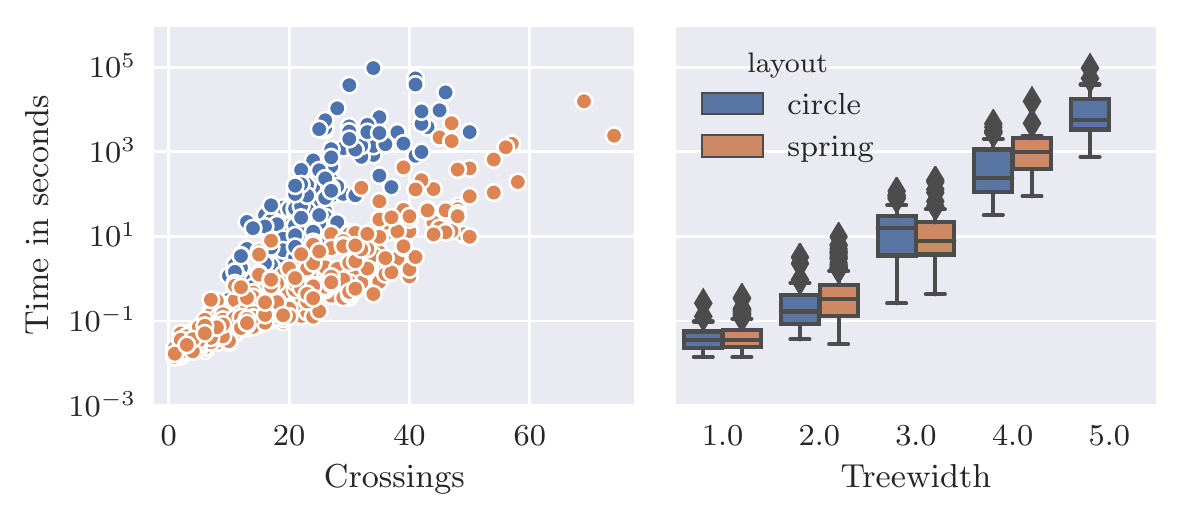}
		\caption{Experimental results for the \maxped algorithm.}
		\label{fig:maxped_plot}
	\end{figure}
	We conducted the same experiment as for \maxsped also for \maxped. 
	We were able to run \maxped for all the spring layouts with $ m = 40, 45, \ldots, 60 $ edges, and for all but seven of the circle layouts with $ m = 40, 45, 50 $ edges. For $ m = 55 $ we were only able to compute the maximum-ink PED for 29 and for $ m = 60 $ only for four of the instances.
	Figure~\ref{fig:maxped_plot} shows the same type of plots as Fig.~\ref{fig:maxsped_plot} for \maxsped. 
	In general one can make similar observations as in the case of \maxsped. 
	It seems that the running time varies more strongly, but given the additional factor of  $ (k+1)^\omega $ in the runtime we also could not compute as many instances as in Section~\ref{sec:experiments}.
	In terms of ink, given the sparsity of the instances it seems to become rarely necessary to draw short edges and almost always the algorithm preserves nearly all ink, $ 99\% $ for spring and $ 97\% $ for the circle layouts.
\end{document}